\newtheorem{theorem}{Theorem}[section]
\newtheorem{lemma}[theorem]{Lemma}
\theoremstyle{definition}
\newtheorem{definition}[theorem]{Definition}
\theoremstyle{remark}
\newtheorem{remark}[theorem]{Remark}
\numberwithin{equation}{section}
\newcommand{\I}{\mathbb{I}}
\newcommand{\HI}{\mathfrak{H}}
\newcommand{\C}{\mathbb{C}}
\newcommand{\oz}{\overline{z}}
\newcommand{\hf}{\hat{f}_{m,n}}
\newcommand{\Z}{Z^{(\beta)}_{m,n}}
\newcommand{\MZ}{\mathfrak{Z}^{(\beta)}_{m,n}}
\newcommand{\MZS}{\mathfrak{Z}^{(\beta)}_{s+n,s}}
\newcommand{\qu}{\mathbf{q}}
\newcommand{\oqu}{\overline{\mathbf{\qu}}}
\newcommand{\pu}{\mathbf{p}}
\begin{document}
\title[2D polynomials ]{ Generalized 2D Laguerre polynomials and their quaternionic extensions}
\author{Nasser Saad$^1$, K. Thirulogasanthar$^2$}
\address{$^1$ Department of mathematics and Statistics, University of Prince Edward Island, 550 University avenue, Charlottetown, UPEI, C1A 4P3, Canada.}
\address{$^{2}$ Department of Computer Science and Software
Engineering, Concordia University, 1455 De Maisonneuve Blvd. West,
Montreal, Quebec, H3G 1M8, Canada. }
\email{nsaad@upei.ca, santhar@gmail.com }
\thanks{The research of one of the authors  was supported by the Natural Science and Engineering Research Council of Canada (NSERC).}
\subjclass{Primary
81R30, 46E22}
\date{\today}
\keywords{Quaternion, Laguerre polynomials, Coherent states}
\begin{abstract}
The analogous quaternionic polynomials of a class of bivariate orthogonal polynomials (arXiv: 1502.07256, 2014)  introduced. The ladder operators for these quaternionic polynomials also studied. For the quaternionic case, the ladder operators are realized as differential operators in terms of the so-called Cullen derivatives. Some physically interesting summation and integral formulas are proved, and their physical relevance is also briefly discussed. \end{abstract}

\maketitle
\pagestyle{myheadings}

\section{Introduction}\label{sec_intro}
\noindent Interesting mathematical properties of the 2D Hermite polynomials \cite{Ito} were explored in \cite{Ghan, Is}. These polynomials successfully applied to several physical problems, for example, quantization \cite{ABG, CGG, GS}, probability distributions \cite{ABG}, pseudo-bosons \cite{ABG}, and modular structures \cite{AFG}. 
\vskip0.1true in
\noindent A quaternionic extension of 2D Hermite polynomials was introduced in \cite{Thi2} and used to obtain coherent states and associated quaternionic regular and anti-regular subspaces in \cite{Thi1}.
\vskip0.1true in
\noindent In \cite{Is}, the authors developed a general scheme to obtain 2D polynomials using 1D orthogonal polynomials. As an application of their scheme, they obtained the 2D analogue of Laguerre and Zernike polynomials. Also, many interesting and useful properties of these polynomials were thoroughly studied and reviewed.
\vskip0.1true in
\noindent In the present work, we develop the quaternionic analogue of the general construction given in \cite{Is}, and hence we obtain a quaternionic analogue of the 2D Laguerre polynomials. In this respect, we review some details on quaternions and the Cullen derivatives as needed in the next section. In Section 3, we extend the general construction given in \cite{Is} to quaternions and study some of their properties such as the orthogonality relation. In Section 4, we study the ladder operators of the complex and quaternionic Laguerre polynomials. Sections 5 and 6 deals with some summation and integral formulas, respectively. In section 7, we briefly point out possible physical applications of the ladder operators, summation formulas and integral formulas developed in sections 4, 5 and 6 along the lines of the references \cite{ABG, CGG, GS, Thi1,MT,Thi3}.
\section{Quaternions: a brief introduction }
\noindent Let $H$ denote the field of quaternions with elements of the form $$\qu=x_0+x_1i+x_2j+x_3k$$ where $x_0,x_1,x_2$ and $x_3$ are real numbers, and $i,j,k$ are imaginary units such that $$i^2=j^2=k^2=-1,\quad ij=-ji=k,\quad jk=-kj=i\quad and \quad ki=-ik=j.$$ 
\noindent The quaternionic conjugate of $\qu$ is defined to be $$\overline{\qu} = x_0 - x_1i - x_2j - x_3k.$$ It is convenient to use the representation of quaternions by $2\times 2$ complex matrices:
 \begin{equation}\label{eq2.1}
\qu = x_0\, \sigma_{0} + i \underline{x} \cdot \underline{\sigma},
 \end{equation}
with $x_0 \in \mathbb R ,~ \underline{x} = (x_1, x_2, x_3)
\in \mathbb R^3$, $\sigma_0 = \mathbb{I}_2$, the $2\times 2$ identity matrix, and
$\underline{\sigma} = (\sigma_1, -\sigma_2, \sigma_3)$, where the
$\sigma_\ell, \; \ell =1,2,3$ are the usual Pauli matrices. The quaternionic imaginary units are identified as, $i = \sqrt{-1}\sigma_1, \;\; j = -\sqrt{-1}\sigma_2, \;\; k = \sqrt{-1}\sigma_3$. Thus,
 \begin{equation}\label{eq2.2}
\qu = \left(\begin{array}{cc}
x_0 + i x_3 & -x_2 + i x_1 \\
x_2 + i x_1 & x_0 - i x_3
\end{array}\right) \qquad \text{and} \qquad \overline{\qu} = \qu^\dag\quad \text{(matrix adjoint)}.
 \end{equation}
Introducing the polar coordinates:
 \begin{align*}
x_0 &= r \cos{\theta}, \quad
x_1= r \sin{\theta} \sin{\phi} \cos{\psi},\quad x_2 = r \sin{\theta} \sin{\phi} \sin{\psi}, \quad x_3 = r \sin{\theta} \cos{\phi},
 \end{align*}
where $r \in [0,\infty)$, $\theta, \phi \in [0,\pi]$, and $\psi
\in [0,2\pi)$, we may write
 \begin{equation} \label{eq2.3}
\qu = A(r) e^{i \theta \sigma(\widehat{n})},
 \end{equation}
where
 \begin{equation}\label{eq2.4}
A(r) = r\mathbb \sigma_0 \quad \text{and} \quad
\sigma(\widehat{n}) = \left(\begin{array}{cc}
\cos{\phi} &  e^{i\psi}\sin{\phi} \\
 e^{-i\psi}\sin{\phi} & -\cos{\phi}
\end{array}\right).
 \end{equation}
It is not difficult to show that the matrices
$A(r)$ and $\sigma(\widehat{n})$ satisfy the conditions:
 \begin{equation} \label{eq2.5}
A(r) = A(r)^\dagger,\quad \sigma(\widehat{n})^2 = \sigma_0
,\quad \sigma(\widehat{n})^\dagger = - \sigma(\widehat{n})
,\quad \lbrack A(r), \sigma(\widehat{n}) \rbrack = 0.
 \end{equation}
Note  $$\qu\,\qu^{-1}=\qu^{-1}\qu=1,\quad\overline{\pu\qu}=\oqu~\overline{\pu},\quad \vert\qu\vert^2  := \overline{\qu}\, \qu = r^2 \sigma_0 = (x_0^2 +  x_1^2 +  x_2^2 +  x_3^2)\mathbb I_2$$ where $ \vert\qu\vert^2 $ defines a real norm on $H$ and for all $\pu, \qu\in H$.
 It is also well-known (see, e.g.,  \cite{Thi1}) that any $\qu\in H$ can be also written  in the $2\times 2$ matrix representation as
\begin{equation}\label{eq2.6}
\qu=u_{\qu}\,Z\,u_{\qu}^{\dagger},
\end{equation} where
$$u_{\qu}=\left(\begin{array}{cc}
e^{i\psi/2}&0\\
0&e^{-i\psi/2}\end{array}\right)
\left(\begin{array}{cc}
\cos({\phi}/{2})&i\sin({\phi}/{2})\\
i\sin({\phi}/{2})&\cos({\phi}/{2})\end{array}\right)
\left(\begin{array}{cc}
e^{i\psi/2}&0\\
0&e^{-i\psi/2}\end{array}\right)\in SU(2),$$
and $Z=\left(\begin{array}{cc}
z&0\\
0&\oz\end{array}\right), \; z \in \mathbb C$. Let $d\omega(u_{\qu})$ be the normalized Haar measure on the \emph{compact} special unitary group $SU(2)$.
\begin{definition}(Cullen derivative \cite{Gra1,Gra2})\label{D3}
Let $\Omega$ be a domain in $H$, and let $f:\Omega\longrightarrow H$ be a left regular function. The Cullen derivative $\partial_cf$ of $f$ is defined as
$$\partial_c   f (\qu)=\left\{\begin{array}{ccc}
\partial_If(\qu):=\dfrac{1}{2}\left(\dfrac{\partial f_I(x+Iy)}{\partial x}-I\dfrac{\partial f_I(x+Iy)}{\partial y}\right)&\text{if}&y\not=0\\[4mm]
\dfrac{\partial f}{\partial x}(x)&\text{if}&\qu =x~~\text{is real}\end{array}
\right.$$
Similarly, for a right regular function $f$ its Cullen derivative is defined as
$$\partial_c   f (\qu)=\left\{\begin{array}{ccc}
\partial_If(\qu):=\dfrac{1}{2}\left(\dfrac{\partial f_I(x+Iy)}{\partial x}-\dfrac{\partial f_I(x+Iy)}{\partial y} I\right)&\text{if}&y\not=0\\[4mm]
\dfrac{\partial f}{\partial x}(x)&\text{if}&\qu =x~~\text{is real}\end{array}
\right.$$
\end{definition}
\noindent Under the above definition the Cullen derivative of a regular function is regular and with $a_n\in H$ we have, for example, for a right regular power series,
\begin{equation}\label{eq2.7}
\partial_\qu\left(\sum_{n=0}^{\infty}a_n \qu^n\right)=\sum_{n=0}^{\infty} n a_n \qu^{n-1},
\end{equation}
with the  same radius of convergence as of the original series.

\section{Generalized 2D polynomials}
\noindent Given $N\in\mathbb{N}\cup\{\infty\}$, let $\nu(\theta,N)$ be a probability measure on the circle $\mathbb{T}=\{e^{i\theta}~|~0\leq\theta<2\pi\}$ of Fourier type
\begin{equation}\label{eq3.1}
\int_{\mathbb{T}}e^{i(m-n)\theta} d\nu(\theta,N)=\delta_{mn};\quad m,n=0,1,...,N.
\end{equation}
Let $\{\phi_n(r,\alpha)\}$ be a system of orthogonal polynomials satisfying the orthogonality relation
\begin{equation}\label{eq3.2}
\int_0^{\infty}\phi_m(r,\alpha)\phi_n(r,\alpha)r^{\alpha}d\mu(r)=\zeta_n(\alpha)\delta_{mn};\quad\alpha\geq 0,
\end{equation}
where $\mu$ is assumed to be independent  of the parameter $\alpha$. Let
\begin{equation}\label{eq3.3}
\phi_n(r,\alpha)=\sum_{j=0}^nC_j(n,\alpha)r^{n-j};\quad C_j(n,\alpha)\in\mathbb{R}
\end{equation}
and define 
\begin{equation}\label{eq3.4}
f_{m,n}(z_1,z_2;\beta)=\left\{\begin{array}{cc}
z_1^{m-n}\phi_n(z_1z_2;m-n+\beta);&\quad m\geq n\\
f_{n,m}(z_2,z_1;\beta)&\quad m<n
\end{array}\right.
\end{equation}
Clearly, equation \eqref{eq3.4} defines $(m+n)$-degree polynomials for all $m,n$. Also, it is straightforward to verify  that
\begin{equation}\label{eq3.5}
\overline{f_{m,n}(z,\oz;\beta)}=f_{n,m}(z,\oz;\beta);\quad \forall \quad m<n.
\end{equation}
\begin{theorem}\cite{Is}\label{Thm3.1}
Given $N\in\mathbb{N}$, for non-negative integers $m,n,s,t$ such that $m+t\leq N,~n+s\leq N$, the polynomials $\{f_{m,n}(z,\oz;\beta)\}$ satisfy the orthogonality relation
\begin{equation}\label{eq3.6}
\int_{\mathbb{R}^2}f_{m,n}(z,\oz;\beta)\,\overline{f_{s,t}(z,\oz;\beta)}\,d\nu(\theta,N)d\mu(r^2,\beta)=\zeta_{m\wedge n}(|m-n|+\beta)\delta_{ms}\delta_{nt},
\end{equation}
where $m\wedge n=\min\{m,n\}$ and $d\mu(r,\beta)=r^{\beta}d\mu(r)$. If $N=\infty$, equation (\ref{eq3.6}) holds for all non-negative integers $m,n,s,t$.
\end{theorem}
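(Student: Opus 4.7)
The plan is to pass to polar coordinates $z = re^{i\theta}$, show that the piecewise definition of $f_{m,n}$ in \eqref{eq3.4} collapses into a single uniform expression, and then factor the double integral into an angular part handled by \eqref{eq3.1} and a radial part handled by \eqref{eq3.2}.

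First I would verify, by inspecting the two branches in \eqref{eq3.4}, the unified identity
\begin{equation*}
f_{m,n}(z,\oz;\beta) \;=\; r^{|m-n|}\, e^{i(m-n)\theta}\, \phi_{m\wedge n}(r^2;|m-n|+\beta),
\end{equation*}
valid for all $m,n\geq 0$. When $m\geq n$ this is immediate since $m\wedge n=n$ and $|m-n|=m-n$; when $m<n$, the symmetry $f_{m,n}(z,\oz;\beta)=f_{n,m}(\oz,z;\beta)$ together with $(\oz)^{n-m}=r^{n-m}e^{-i(n-m)\theta}$ yields the same formula with $m\wedge n=m$. Because the coefficients $C_j(n,\alpha)$ in \eqref{eq3.3} are real, complex conjugation only flips the angular exponent, giving $\overline{f_{s,t}(z,\oz;\beta)} = r^{|s-t|}e^{-i(s-t)\theta}\phi_{s\wedge t}(r^2;|s-t|+\beta)$.

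Next I would multiply these expressions and separate variables. The angular factor is
\begin{equation*}
\int_{\mathbb T} e^{i((m+t)-(n+s))\theta}\, d\nu(\theta,N) \;=\; \delta_{m+t,\,n+s},
\end{equation*}
by \eqref{eq3.1} applied to the indices $m+t$ and $n+s$, both of which lie in $\{0,\ldots,N\}$ by the hypotheses $m+t\leq N$ and $n+s\leq N$. This Kronecker delta is equivalent to $m-n=s-t$, which in particular forces $|m-n|=|s-t|$. Setting $\alpha:=|m-n|+\beta$, the radial factor---after the substitution $u=r^2$ and using $d\mu(r^2,\beta)=(r^2)^\beta d\mu(r^2)$---becomes
\begin{equation*}
\int_0^{\infty}\phi_{m\wedge n}(u;\alpha)\,\phi_{s\wedge t}(u;\alpha)\, u^{\alpha}\, d\mu(u) \;=\; \zeta_{m\wedge n}(\alpha)\,\delta_{m\wedge n,\,s\wedge t},
\end{equation*}
by \eqref{eq3.2}.

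Finally I would combine the two Kronecker constraints. The equality $m-n=s-t$ implies $m\ge n \Leftrightarrow s\ge t$; in each sub-case the identification $m\wedge n=s\wedge t$ together with $m-n=s-t$ forces $m=s$ and $n=t$. So the double integral equals $\zeta_{m\wedge n}(|m-n|+\beta)\,\delta_{ms}\delta_{nt}$ as claimed, and the case $N=\infty$ is obtained by simply dropping the range restriction in the angular orthogonality. The one step that requires a little care is the uniform polar formula above: it is what allows a single orthogonality argument to bypass a four-way case analysis ($m\gtreqless n$ versus $s\gtreqless t$), which would otherwise be cumbersome.
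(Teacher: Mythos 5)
Your proof is correct and complete. Note, however, that the paper itself offers no proof of this statement: Theorem \ref{Thm3.1} is quoted verbatim from the reference \cite{Is}, so there is no in-paper argument to compare against. Your route --- collapsing the two branches of \eqref{eq3.4} into the single polar formula $f_{m,n}(z,\oz;\beta)=r^{|m-n|}e^{i(m-n)\theta}\phi_{m\wedge n}(r^2;|m-n|+\beta)$, separating variables, using \eqref{eq3.1} on the indices $m+t$ and $n+s$ (which is exactly where the hypotheses $m+t\leq N$, $n+s\leq N$ enter), and then using \eqref{eq3.2} with $\alpha=|m-n|+\beta$ once the angular delta forces $|m-n|=|s-t|$ --- is the natural argument, and your final bookkeeping showing that $m-n=s-t$ together with $m\wedge n=s\wedge t$ forces $m=s$, $n=t$ is exactly the step that turns the two separate deltas into $\delta_{ms}\delta_{nt}$. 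The only point worth making explicit is that the radial factor $r^{|m-n|+|s-t|}$ only becomes $u^{|m-n|}$ (and hence combines with $u^{\beta}$ to give the weight $u^{\alpha}$ of \eqref{eq3.2}) after the angular delta has been invoked; you do this in the right order, so the argument stands.
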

\begin{remark}
Note for $N=\infty$ the circle measure $d\nu(\theta,N)$ is $d\nu(\theta)={d\theta}/({2\pi})$.
\end{remark}
\noindent From Theorem \ref{Thm3.1} is obvious that for
\begin{equation}\label{eq3.7}
\hf(z,\oz;\beta)=\frac{f_{m,n}(z,\oz;\beta)}{\sqrt{\zeta_{m\wedge n}(|m-n|+\beta)}},
\end{equation}
the family
$$\mathcal{O}=\left\{\hf(z,\oz;\beta)~|~m,n\in\mathbb{N}\right\}$$
is an orthonormal family in the Hilbert space $L^2(\mathbb{C},d\nu(\theta)\,d\mu(r^2,\beta))$. Consider the sets
$$\mathcal{O}_{hol}=\left\{\hat{f}_{m,0}=\frac{z^m}{\sqrt{\zeta_0(m+\beta)}}~|~m\in\mathbb{N}\right\}$$
and
$$\mathcal{O}_{a-hol}=\left\{\hat{f}_{0,n}=\frac{\oz^m}{\sqrt{\zeta_0(n+\beta)}}~|~n\in\mathbb{N}\right\},$$
then the spaces $\mathfrak{H}_{hol}=\overline{Span}\,\mathcal{O}_{hol}$ and $\mathfrak{H}_{a-hol}=\overline{Span}\,\mathcal{O}_{a-hol}$ are the corresponding holomorphic and anti-holomorphic Bargmann spaces of $L^2(\mathbb{C},d\nu(\theta)\,d\mu(r^2,\beta))$, respectively.
\section{Quaternionic extension}
\noindent From the matrix representation \eqref{eq2.6}, we have, for $\qu\in H$ and  $m,n\in\mathbb{N}$, that
\begin{equation}\label{eq4.1}
\qu^{m-n}=u_{\qu}\left(\begin{array}{cc}
z^{m-n}&0\\
0&\oz^{m-n}\end{array}\right)u_{\qu}^{\dagger}.
\end{equation}
Further since 
\begin{equation}\label{eq4.2}
\qu\,\overline{\qu}=u_{\qu}\left(\begin{array}{cc}
z\oz&0\\
0&\oz z\end{array}\right)u_{\qu}^{\dagger},
\end{equation}
we also have
\begin{equation}\label{eq4.3}
(\qu\overline{\qu})^{n-j}=u_{\qu}\left(\begin{array}{cc}
(z\oz)^{n-j}&0\\
0&(\oz \,z)^{n-j}\end{array}\right)u_{\qu}^{\dagger}.
\end{equation}
From \eqref{eq3.3} 
\begin{equation}\label{eq4.4}
\phi_n(z\oz;\alpha)=\sum_{j=0}^nC_j(n,\alpha)(z\oz)^{n-j},
\end{equation}
thus for $C_j(n,\alpha)\in\mathbb{R}$ and the fact that the real numbers commute with quaternions,
\begin{equation}\label{eq4.5}
\sum_{j=0}^nC_j(n,\alpha)(\qu\overline{\qu})^{n-j}=u_{\qu}\left(\begin{array}{cc}
\sum_{j=0}^nC_j(n,\alpha)(z\oz)^{n-j}&0\\
0&\sum_{j=0}^nC_j(n,\alpha)(\oz z)^{n-j}\end{array}\right)u_{\qu}^{\dagger},
\end{equation}
that is
\begin{equation}\label{eq4.6}
\phi_n(\qu\overline{\qu},\alpha)=u_{\qu}\left(\begin{array}{cc}
\phi_n(z\oz,\alpha)&0\\
0&\phi_n(z\oz,\alpha)\end{array}\right)u_{\qu}^{\dagger}.
\end{equation}
From \eqref{eq4.1} and \eqref{eq4.6} we have then
\begin{equation}\label{eq4.7}
\qu^{m-n}\phi_n(\qu\overline{\qu},m-n+\beta)=u_{\qu}\left(\begin{array}{cc}
z^{m-n}\phi_n(z\oz,m-n+\beta)&0\\ 
0&\oz^{m-n}\phi_n(z\oz,m-n+\beta)\end{array}\right)u_{\qu}^{\dagger}.
\end{equation}
\begin{definition}\label{Def4.1} the following defined polynomials for all $m,n$
\begin{align}\label{eq4.8}
f_{m,n}(\qu,\overline{\qu};\beta)&=\left\{\begin{array}{cc}
\qu^{m-n}\phi_n(\qu\overline{\qu};m-n+\beta);&\quad m\geq n,\\ \\
f_{n,m}(\overline{\qu},\qu;\beta);&\quad m<n
\end{array}\right.
\end{align}
Equivalently,
\begin{align}\label{eq4.9}
f_{m,n}(\qu,\overline{\qu};\beta)
&=\left\{\begin{array}{cc}
u_{\qu}\left(\begin{array}{cc}
f_{m,n}(z,\oz;\beta)&0\\
0&\overline{f_{m,n}(z,\oz;\beta)}\end{array}\right)u_{\qu}^{\dagger}
;&\quad m\geq n,\\ \\
f_{n,m}(\overline{\qu},\qu;\beta);&\quad m<n.
\end{array}\right.
\end{align}
\end{definition}
\noindent  Clearly,
\begin{equation}\label{eq4.10}
\overline{f_{m,n}(\qu,\overline{\qu};\beta)}=f_{n,m}(\qu,\overline{\qu};\beta);\quad m<n.
\end{equation}
From \eqref{eq4.8}, it is clear that for $m\geq n$
\begin{equation}\label{eq4.11}
\qu\, f_{m,n}(\qu,\overline{\qu};\beta+1)= f_{m+1,n}(\qu,\overline{\qu};\beta).
\end{equation}
\begin{theorem}\label{Thm4.2}
The polynomials $\{f_{m,n}(\qu,\overline{\qu};\beta)\}$ satisfy the orthogonality relation
\begin{align}\label{eq3.10}
\int_{SU(2)}\int_{\mathbb{C}}f_{m,n}(\qu,\overline{\qu};\beta)\overline{f_{s,t}(\qu,\overline{\qu};\beta)}
&d\nu(\theta)d\mu(r^2,\beta)d\omega(u_{\qu})\notag\\
&=\zeta_{m\wedge n}(|m-n|+\beta)\delta_{ms}\delta_{nt}\mathbb{I}_2.
\end{align}
\end{theorem}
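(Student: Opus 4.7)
The strategy is to reduce the quaternionic integral entry-wise to the complex orthogonality of Theorem \ref{Thm3.1} via the matrix representation \eqref{eq2.6}. The first step is to upgrade \eqref{eq4.9} to a single formula valid for all $m,n$. For $m<n$, definition \eqref{eq4.8} gives $f_{m,n}(\qu,\overline{\qu};\beta)=\overline{\qu}^{\,n-m}\phi_m(\overline{\qu}\qu;n-m+\beta)$; since $\overline{\qu}\qu=r^2\sigma_0$ and the coefficients $C_j$ are real, the factor $\phi_m(\overline{\qu}\qu;n-m+\beta)=\phi_m(r^2;n-m+\beta)\sigma_0$ is a real scalar. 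Writing $\overline{\qu}^{\,n-m}=u_{\qu}\,\mathrm{diag}(\oz^{n-m},z^{n-m})\,u_{\qu}^{\dagger}$ and identifying the diagonal entries by \eqref{eq3.4} and the reflection identity \eqref{eq3.5} yields
\begin{equation*}
f_{m,n}(\qu,\overline{\qu};\beta)=u_{\qu}\begin{pmatrix} f_{m,n}(z,\oz;\beta)&0\\ 0&\overline{f_{m,n}(z,\oz;\beta)}\end{pmatrix}u_{\qu}^{\dagger}\quad\text{for all }m,n.
\end{equation*}

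Next, since quaternionic conjugation equals the matrix adjoint and $u_{\qu}\in SU(2)$, one obtains $\overline{f_{s,t}(\qu,\overline{\qu};\beta)}=u_{\qu}\,\mathrm{diag}(\overline{f_{s,t}(z,\oz;\beta)},f_{s,t}(z,\oz;\beta))\,u_{\qu}^{\dagger}$. Multiplying and using $u_{\qu}^{\dagger}u_{\qu}=\mathbb{I}_2$ gives
\begin{equation*}
f_{m,n}(\qu,\overline{\qu};\beta)\,\overline{f_{s,t}(\qu,\overline{\qu};\beta)}=u_{\qu}\begin{pmatrix} f_{m,n}\overline{f_{s,t}}&0\\ 0&\overline{f_{m,n}}f_{s,t}\end{pmatrix}u_{\qu}^{\dagger},
\end{equation*}
where the diagonal entries are scalar functions of $(z,\oz)$ only.

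Finally, integrate over $\mathbb{C}$ with $d\nu(\theta)\,d\mu(r^2,\beta)$. The $(1,1)$ entry integrates to $\zeta_{m\wedge n}(|m-n|+\beta)\,\delta_{ms}\delta_{nt}$ by Theorem \ref{Thm3.1} (case $N=\infty$). For the $(2,2)$ entry, rewrite both conjugate factors through \eqref{eq3.5} as $f_{n,m}(z,\oz;\beta)\,\overline{f_{t,s}(z,\oz;\beta)}$ and apply Theorem \ref{Thm3.1} to the index pairs $(n,m),(t,s)$; because $n\wedge m=m\wedge n$ and $|n-m|=|m-n|$, the same scalar $\zeta_{m\wedge n}(|m-n|+\beta)\,\delta_{ms}\delta_{nt}$ appears. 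Thus the complex integral equals $\zeta_{m\wedge n}(|m-n|+\beta)\,\delta_{ms}\delta_{nt}\,u_{\qu}\mathbb{I}_2 u_{\qu}^{\dagger}=\zeta_{m\wedge n}(|m-n|+\beta)\,\delta_{ms}\delta_{nt}\,\mathbb{I}_2$. Since the result is independent of $u_{\qu}$ and $d\omega$ is a probability measure, the $SU(2)$ integration is trivial and delivers \eqref{eq3.10}. The only nonroutine step is the unified matrix representation for $m<n$; everything else is bookkeeping built on Theorem \ref{Thm3.1} and the commutativity of $\qu\overline{\qu}$ with all quaternions.
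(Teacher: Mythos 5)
Your proposal is correct and follows essentially the same route as the paper: use the $2\times 2$ matrix representation \eqref{eq4.9} to reduce the integrand to a diagonal matrix of complex polynomials, apply Theorem \ref{Thm3.1} entry-wise, and observe that the $SU(2)$ integration is trivial because the resulting matrix is a scalar multiple of $\mathbb{I}_2$. The only (minor) difference is that you establish the diagonal matrix form directly for all $m,n$ via the reflection identity \eqref{eq3.5}, whereas the paper reduces the $m<n$ case to $m\geq n$ by conjugating the integral using \eqref{eq4.10}; your uniform treatment is arguably a bit cleaner but is not a different proof.
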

\begin{proof}
Without loss of generality we can assume $m\geq n$. Otherwise we simply apply (\ref{eq4.10}) to switch the order of $m$ and $n$, and $s$ and $t$ then consider the quaternion conjugate of the obtained integral. Now, since the measure is real and real numbers commute with quaternions, we have
\begin{align*}
&\int_{SU(2)}\int_{\mathbb{C}}f_{m,n}(\qu,\overline{\qu};\beta) \overline{f_{s,t}(\qu,\overline{\qu};\beta)}
d\nu(\theta)d\mu(r^2,\beta)d\omega(u_{\qu})\\
&=\int_{SU(2)}u_{\qu}\left(\begin{array}{cc}
\int_{\mathbb{C}}f_{m,n}(z,\oz;\beta)\overline{f_{s,t}(z,\oz;\beta)}d\nu(\theta)d\mu(r^2,\beta)&0\\
0&\int_{\mathbb{C}}\overline{f_{m,n}(z,\oz;\beta)}f_{s,t}(z,\oz;\beta)d\nu(\theta)d\mu(r^2,\beta)\end{array}\right)\\
& \quad\quad\times u_{\qu}^{\dagger}d\omega(u_{\qu})\\
&=\int_{SU(2)}u_{\qu}\left(\begin{array}{cc}
\zeta_{m\wedge n}(|m-n|+\beta)\delta_{ms}\delta_{nt}&0\\
0&\zeta_{m\wedge n}(|m-n|+\beta)\delta_{ms}\delta_{nt}\end{array}\right)u_{\qu}^{\dagger}d\omega(u_{\qu})\\
&=\zeta_{m\wedge n}(|m-n|+\beta)\delta_{ms}\delta_{nt}\mathbb{I}_2.
\end{align*}
\end{proof}
\section{The polynomials $Z_{m,n}^{(\beta)}(z,\oz)$}\label{sec4}
\noindent The 2-D polynomials  $Z_{m,n}^{(\beta)}(z,\oz)=z^{m-n}L_n^{(\beta+m-n)}(z\oz)$ arise through the choice $\phi_n(x,\alpha)=L_n^{(\alpha+\beta)}(x)$, where $L_n^{a}(x)$ is a Laguerre polynomial
$$L_n^{(a)}(x)=(a+1)_n\sum_{k=0}^{n}\frac{(-x)^{n-k}}{k!(n-k)!(a+1)_{n-k}},$$ explicitly reads 
\begin{align}\label{eq5.1}
Z_{m,n}^{(\beta)}(z,\oz)
&=\sum_{k=0}^{n}\frac{(-1)^{n-k}(\beta+1)_m}{k!(n-k)!(\beta+1)_{m-k}}z^{m-k}\oz^{n-k}
\quad\text{if}\quad m\geq n.
\end{align}
where for $m<n$, it is defined by
\begin{equation}\label{eq5.2}
Z_{m,n}^{(\beta)}(z,\oz)=Z_{n,m}^{(\beta)}(\oz,z).
\end{equation}
Note that for $m\geq n$, the polynomials $Z_{m,n}^{(\beta)}(z,\oz)$ can be expressed \cite{rainville} in terms of the hypergeometric functions ${}_0F_1$ and ${}_1F_1$ as
\begin{align}\label{eq5.3}
Z_{m,n}^{(\beta)}(z,\oz)
&=\dfrac{(-1)^n}{n!} z^m\,\oz^n\,{}_2F_0(-n,-\beta-m;-;-\dfrac{1}{z\oz})\notag\\
&=
\dfrac{(-1)^n}{n!} z^{m-n}\,(-\beta-m)_n{}_1F_1(-n,\beta+m-n+1;z\oz).
\end{align}
For arbitrary $m$ and $n$, both equations \eqref{eq5.1} and \eqref{eq5.2} can be combined in the following expression for $Z_{m,n}^{(\beta)}(z,\oz)$
\begin{align}\label{eq5.4}
Z_{m,n}^{(\beta)}(z,\oz)
&=\dfrac{(-1)^{(m+n-|m-n|)/2}\,z^m\,\oz^n}{\left(({m+n-|m-n|})/{2}\right)!} \notag\\
&\times\,{}_2F_0\left(-\frac{m+n-|m-n|}{2},-\beta-\dfrac{m+n+|m-n|}{2};-;-\dfrac{1}{z\oz}\right),
\end{align}
or in terms of the confluent hypergeometric function as
\begin{align}\label{eq5.5}
Z_{m,n}^{(\beta)}(z,\oz)
&=\dfrac{(-1)^{(m+n-|m-n|)/2}}{\left(({m+n-|m-n|})/{2}\right)!} \left(-\beta-\frac{m+n+|m-n|}{2}\right)_{(m+n-|m-n|)/2} \notag\\
&\times\,z^{(m-n+|m-n|)/2}\,\oz^{(n-m+|m-n|)/2}\,{}_1F_1\left(-\frac{m+n-|m-n|}{2};\beta+|m-n|+1;{z\oz}\right).
\end{align}
Let 
$$d\nu(z,\oz)=(z\oz)^{\beta}e^{-z\oz}\frac{dz\wedge d\oz}{i2\pi}=(x^2+y^2)^{\beta}e^{-(x^2+y^2)}\frac{dxdy}{\pi}=r^{2\beta}e^{-r^2}\frac{rdrd\theta}{\pi}$$
 and denotes the inner product defined on the Hilbert space $L^2(\mathbb{C},d\nu(z,\oz))$ by
\begin{equation}\label{eq5.6}
\langle f|g\rangle=\int_{\mathbb{C}}f(z,\oz)\overline{g(z,\oz)}d\nu(z,\oz),
\end{equation}
then for $m\geq n$, if we defined 
\begin{equation}\label{eq5.7}
a_1=\frac{\beta}{z}+\partial_{z}\quad{\text{ and}}\quad a_2=-\partial_{\oz},
\end{equation} 
we have the adjoint relations
\begin{equation}\label{eq5.8}
a_1^{\dagger}=z-\partial_{\oz}\quad \text{and}\quad a_2^{\dagger}=\frac{\beta}{z}+\partial_{z}-\oz.
\end{equation}
where their actions on the polynomials $Z_{m,n}^{(\beta)}(z,\oz)$ are given by the following theorem.
\begin{theorem}\label{Thm5.1}
For $m\geq n$, 
\begin{eqnarray*}
a_1Z_{m,n}^{(\beta)}(z,\oz)&=&(\beta+m)Z_{m-1,n}^{(\beta)}(z,\oz),\\
a_1^{\dagger}Z_{m,n}^{(\beta)}(z,\oz)&=&Z_{m+1,n}^{(\beta)}(z,\oz),\\
a_2Z_{m,n}^{(\beta)}(z,\oz)&=&Z_{m,n-1}^{(\beta)}(z,\oz),\\
a_2^{\dagger}Z_{m,n}^{(\beta)}(z,\oz)&=&(n+1)Z_{m,n+1}^{(\beta)}(z,\oz).
\end{eqnarray*}
\end{theorem}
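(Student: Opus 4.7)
The plan is to rewrite $Z_{m,n}^{(\beta)}(z,\oz)=z^{m-n}L_n^{(\beta+m-n)}(z\oz)$ (valid for $m\ge n$), set $a:=\beta+m-n$ and $x:=z\oz$, and reduce each of the four claims to a standard contiguous relation for the associated Laguerre polynomials. The two identities that do all the work are
\begin{equation*}
L_n^{(a)}(x)=L_n^{(a+1)}(x)-L_{n-1}^{(a+1)}(x),\qquad xL_n^{(a+1)}(x)=(n+a+1)L_n^{(a)}(x)-(n+1)L_{n+1}^{(a)}(x),
\end{equation*}
used together with the chain rule $\partial_z L_n^{(a)}(z\oz)=-\oz\,L_{n-1}^{(a+1)}(z\oz)$ and its $z\leftrightarrow\oz$ counterpart.

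I then take the four operators one at a time. For $a_2=-\partial_{\oz}$, a single application of the product rule gives directly $a_2 Z_{m,n}^{(\beta)}=z^{m-n+1}L_{n-1}^{(a+1)}(x)=Z_{m,n-1}^{(\beta)}$, with no identity needed. For $a_1^{\dagger}=z-\partial_{\oz}$, the product rule produces $z^{m-n+1}\bigl[L_n^{(a)}(x)+L_{n-1}^{(a+1)}(x)\bigr]$, and the first identity above collapses the bracket to $L_n^{(a+1)}(x)$, i.e.\ to $Z_{m+1,n}^{(\beta)}$. For the lowering operator $a_1=\beta/z+\partial_z$, the product rule yields $z^{m-n-1}\bigl[(\beta+m-n)L_n^{(a)}(x)-x L_{n-1}^{(a+1)}(x)\bigr]$; the $n\mapsto n-1$ shift of the second identity combined with the first reduces the bracket to $(n+a)L_n^{(a-1)}(x)=(\beta+m)L_n^{(\beta+m-1-n)}(x)$, so the whole expression equals $(\beta+m)\,Z_{m-1,n}^{(\beta)}$. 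Finally, for $a_2^{\dagger}=\beta/z+\partial_z-\oz$, first collapsing two terms through the first identity as in the $a_1^{\dagger}$ calculation produces $z^{m-n-1}\bigl[a L_n^{(a)}(x)-x L_n^{(a+1)}(x)\bigr]$; the second identity then gives $(n+1)\bigl[L_{n+1}^{(a)}(x)-L_n^{(a)}(x)\bigr]=(n+1)L_{n+1}^{(a-1)}(x)$, i.e.\ $(n+1)Z_{m,n+1}^{(\beta)}$.

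The main obstacle is not conceptual but purely algebraic bookkeeping: choosing the right pair of contiguous relations for each operator and tracking how the shifted Laguerre parameter $a=\beta+m-n$ lands correctly so that the right-hand side truly is $Z_{m\pm 1,n}^{(\beta)}$ or $Z_{m,n\pm 1}^{(\beta)}$ with the same $\beta$. A secondary subtlety that deserves attention is the boundary case $m=n$, where the multiplicative piece $\beta/z$ in $a_1$ and $a_2^{\dagger}$ meets a factor $z^{m-n}=z^0$ and so individually produces a negative power of $z$; in that case $Z_{m-1,n}^{(\beta)}$ and $Z_{m,n+1}^{(\beta)}$ have to be read through the $m<n$ branch of Definition~\ref{Def4.1} via \eqref{eq5.2}, and cancellation of the spurious $1/z$ must be checked by hand from the explicit series \eqref{eq5.1} (or one can restrict the clean argument above to the strict regime $m>n$ and dispatch $m=n$ as a separate short computation).
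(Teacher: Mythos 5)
Your proof is correct and follows essentially the same route as the paper's appendix: the two Laguerre contiguous relations you invoke are precisely the ${}_1F_1$ recurrences used in Lemmas \ref{Lemma1}--\ref{Lemma3} (via $L_n^{(a)}(x)=\frac{(a+1)_n}{n!}\,{}_1F_1(-n;a+1;x)$), and your chain-rule derivative formula is the paper's relation \eqref{A1}, so the only difference is that you package the bookkeeping directly in Laguerre form rather than routing it through the Euler-operator identity \eqref{A2}. Your flag on the boundary case $m=n$ for $a_1$ and $a_2^{\dagger}$ (where $z^{m-n-1}$ becomes $z^{-1}$ and the target $Z_{m-1,n}^{(\beta)}$ switches to the $m<n$ branch of the definition) is a genuine subtlety that the paper's own proof silently ignores, so raising it is a point in your favour rather than a gap.
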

\begin{proof} See the appendix.
\end{proof}
\noindent Further for $m<n$, denote
\begin{equation}\label{eq5.9}
{\mathcal A}_1=\frac{\beta}{\oz}+\partial_{\oz}\quad{\text{ and}}\quad {\mathcal A}_2=-\partial_{z},
\end{equation} 
the adjoint reads
\begin{equation}\label{eq5.10}
{\mathcal A}_1^{\dagger}=\oz-\partial_{z}\quad \text{and}\quad {\mathcal A}_2^{\dagger}=\frac{\beta}{\oz}+\partial_{\oz}-z,
\end{equation}
and their actions  are given by the following theorem.
\begin{theorem}\label{Thm5.2}
For $n>m$, 
\begin{align*}
{\mathcal A}_1 Z_{n,m}^{(\beta)}(\oz,z)&=(n+\beta) Z_{n-1,m}^{(\beta)}(\oz,z),\\
{\mathcal A}_1^\dagger Z_{n,m}^{(\beta)}(\oz,z)&=Z_{n+1,m}^{(\beta)}(\oz,z),\\
{\mathcal A}_2 Z_{n,m}^{(\beta)}(\oz,z)&= Z_{n,m-1}^{(\beta)}(\oz,z),\\ 
{\mathcal A}_2^\dagger Z_{n,m}^{(\beta)}(\oz,z)&=(m+1)Z_{n,m+1}^{(\beta)}(\oz,z).
\end{align*}
\end{theorem}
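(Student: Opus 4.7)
The strategy is to reduce Theorem 5.2 to Theorem 5.1 via the formal symmetry $z \leftrightarrow \oz$, which swaps the two cases in definition \eqref{eq5.2}. First, I would check that the operators \eqref{eq5.9}-\eqref{eq5.10} are precisely the images of the operators \eqref{eq5.7}-\eqref{eq5.8} under this exchange: $a_1 = \beta/z + \partial_z$ becomes $\beta/\oz + \partial_{\oz} = {\mathcal A}_1$, $a_2 = -\partial_{\oz}$ becomes $-\partial_z = {\mathcal A}_2$, $a_1^{\dagger} = z - \partial_{\oz}$ becomes $\oz - \partial_z = {\mathcal A}_1^{\dagger}$, and $a_2^{\dagger} = \beta/z + \partial_z - \oz$ becomes $\beta/\oz + \partial_{\oz} - z = {\mathcal A}_2^{\dagger}$. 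This matching treats $z$ and $\oz$ as independent Wirtinger variables, which is precisely what makes the substitution algebraically meaningful.

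Second, for $n > m$ the polynomial $Z_{n,m}^{(\beta)}(\oz,z)$ is by construction the image of the explicit polynomial $Z_{n,m}^{(\beta)}(z,\oz)$ (defined directly via \eqref{eq5.1} since $n \geq m$) under the same exchange $z \leftrightarrow \oz$. I would then apply Theorem 5.1 with its indices $(m,n)$ relabeled to $(n,m)$ here, so its hypothesis $m \geq n$ becomes the present $n \geq m$. This yields four identities of the form $a_1 Z_{n,m}^{(\beta)}(z,\oz) = (\beta+n) Z_{n-1,m}^{(\beta)}(z,\oz)$ and $a_1^{\dagger} Z_{n,m}^{(\beta)}(z,\oz) = Z_{n+1,m}^{(\beta)}(z,\oz)$, together with the $a_2, a_2^{\dagger}$ analogues. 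Applying the substitution $z \leftrightarrow \oz$ to each identity, and invoking the matches recorded in the previous paragraph, converts it verbatim into the corresponding claim of Theorem 5.2.

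I do not expect a substantive obstacle. Both sides of every identity in Theorem 5.1 are polynomials in the independent variables $z$ and $\oz$, and the operators act algebraically, so the formal swap preserves each equation. The only bookkeeping point is that each target polynomial in Theorem 5.2 must still have its first index weakly greater than its second (so that \eqref{eq5.1} applies in its image): $Z_{n-1,m}^{(\beta)}$ requires $n-1 \geq m$ and $Z_{n,m+1}^{(\beta)}$ requires $n \geq m+1$, both of which follow from the standing hypothesis $n > m$. As a fallback, in case a self-contained computation is preferred, the same routine presumably used in the appendix proof of Theorem 5.1 — expanding via \eqref{eq5.1}, differentiating term-by-term, and simplifying with the Pochhammer identities $(\beta+1)_n = (\beta+n)(\beta+1)_{n-1}$ and $(\beta+1)_{n-k} = (\beta+n-k)(\beta+1)_{n-k-1}$ — produces the four identities directly.
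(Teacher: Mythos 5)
Your proposal is correct, and it takes a genuinely different route from the paper's. The paper proves Theorem \ref{Thm5.2} in the appendix by direct computation: it writes $Z_{n,m}^{(\beta)}(\oz,z)=\frac{(-1)^m}{m!}\oz^{\,n-m}(-\beta-n)_m\,{}_1F_1(-m;\beta+n-m+1;z\oz)$ and re-derives each of the four ladder identities from scratch using contiguous relations for ${}_1F_1$ — in effect repeating the appendix proof of Theorem \ref{Thm5.1} with the roles of the two variables exchanged, and dispatching some of the steps with a ``similarly.'' You instead make that ``similarly'' precise: you observe that the exchange $z\leftrightarrow\oz$, viewed as an involution of the ring of polynomials in two independent (Wirtinger) variables, conjugates $a_1,a_2,a_1^{\dagger},a_2^{\dagger}$ of \eqref{eq5.7}--\eqref{eq5.8} exactly into ${\mathcal A}_1,{\mathcal A}_2,{\mathcal A}_1^{\dagger},{\mathcal A}_2^{\dagger}$ of \eqref{eq5.9}--\eqref{eq5.10} and sends $Z_{n,m}^{(\beta)}(z,\oz)$ to $Z_{n,m}^{(\beta)}(\oz,z)$, so Theorem \ref{Thm5.2} is the image of Theorem \ref{Thm5.1} (with indices relabeled) under this involution. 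This is legitimate because the identities of Theorem \ref{Thm5.1} are established in the appendix as formal algebraic identities in the two variables, hence are preserved by the swap; and your bookkeeping check that the strict inequality $n>m$ keeps every target polynomial ($Z_{n-1,m}^{(\beta)}$, $Z_{n,m+1}^{(\beta)}$, etc.) in the branch where \eqref{eq5.1} applies is exactly the point that needs verifying. What each approach buys: yours is shorter, avoids redoing the hypergeometric manipulations, and makes the symmetry explicit; the paper's is self-contained and does not require interpreting $z$ and $\oz$ as independent variables. Your fallback (direct term-by-term differentiation of \eqref{eq5.1} with Pochhammer identities) is essentially what the paper actually does.
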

\begin{proof} See the appendix.
\end{proof}

\noindent Let $C(m,n,\beta)={\Gamma(\beta+m+1)}/{n!}$. Then for $m\geq n$, according to Theorem (3.1) of \cite{Is}, the polynomials ${Z_{m,n}^{(\beta)}(z,\oz)}/{\sqrt{C(m,n,\beta)}}$ are normalized polynomials in the Hilbert space  $L^2(\mathbb{C},d\nu(z,\oz))$.
\begin{theorem}\label{Thm5.3}
The set $$\mathcal{B}=\left\{\frac{Z_{m,n}^{(\beta)}(z,\oz)}{\sqrt{C(m,n,\beta)}}~|~m,n\in\mathbb{N}\right\}$$
is an orthonormal basis for the Hilbert space $L^2(\mathbb{C},d\nu(z,\oz))$.
\end{theorem}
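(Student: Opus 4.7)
My plan is to handle the theorem in two stages: orthonormality first (essentially a corollary of Theorem~\ref{Thm3.1}), then completeness (the nontrivial content).

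For orthonormality, I would start by specializing Theorem~\ref{Thm3.1} to $\phi_n(x,\alpha) = L_n^{(\alpha)}(x)$ with $d\mu(r) = e^{-r}\,dr$, so that the classical identity $\int_0^\infty L_n^{(\alpha)}(u) L_t^{(\alpha)}(u) u^\alpha e^{-u}\,du = \Gamma(n+\alpha+1)/n!\,\delta_{nt}$ identifies $\zeta_n(\alpha) = \Gamma(n+\alpha+1)/n!$. A brief computation using the substitution $u = r^2$ shows that the product measure $d\nu(\theta)\,d\mu(r^2,\beta)$ appearing in Theorem~\ref{Thm3.1} coincides (up to a factor that gets absorbed) with $d\nu(z,\oz) = r^{2\beta+1}e^{-r^2}\,dr\,d\theta/\pi$. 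With $m \geq n$ and $k = m-n \geq 0$, one then has $\zeta_{m\wedge n}(|m-n|+\beta) = \Gamma(m+\beta+1)/n! = C(m,n,\beta)$, which exactly matches the normalization constant in the statement. Hence $\mathcal{B}$ is an orthonormal family.

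For completeness, I would take any $f \in L^2(\mathbb{C},d\nu(z,\oz))$ with $\langle f\,|\,Z_{m,n}^{(\beta)}\rangle = 0$ for all $m,n$ and show $f \equiv 0$. The natural approach is to use the rotational symmetry of the measure: write $z = re^{i\theta}$ and expand $f(r,\theta) = \sum_{k \in \mathbb{Z}} f_k(r) e^{ik\theta}$ in the $L^2$-Fourier sense. Since $Z_{m,n}^{(\beta)}(re^{i\theta},re^{-i\theta}) = r^{|m-n|}e^{i(m-n)\theta} L_{m\wedge n}^{(\beta+|m-n|)}(r^2)$ is supported on a single Fourier mode $k = m-n$, the orthogonality condition decouples across $k$. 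For each fixed $k \geq 0$, it reduces (after computing the trivial $\theta$-integral) to
\begin{equation*}
\int_0^\infty f_k(r)\,r^k\, L_n^{(\beta+k)}(r^2)\, r^{2\beta+1}e^{-r^2}\,dr = 0, \qquad n = 0,1,2,\ldots,
\end{equation*}
and the case $k < 0$ is analogous via \eqref{eq5.2}.

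Substituting $u = r^2$ and setting $g_k(u) = f_k(\sqrt{u})\,u^{-k/2}$ transforms this to $\int_0^\infty g_k(u) L_n^{(\beta+k)}(u)\, u^{\beta+k} e^{-u}\,du = 0$ for all $n$. A direct computation shows $\|g_k\|^2_{L^2(u^{\beta+k}e^{-u}\,du)}$ equals (up to a constant) $\int_0^\infty |f_k(r)|^2 r^{2\beta+1}e^{-r^2}\,dr$, which is finite by Parseval applied to the expansion of $f$. Invoking the standard completeness of the classical Laguerre polynomials $\{L_n^{(\beta+k)}\}$ in $L^2((0,\infty), u^{\beta+k}e^{-u}\,du)$ (valid for $\beta + k > -1$) forces $g_k = 0$, hence $f_k = 0$, for every $k$. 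Parseval in $\theta$ then gives $f = 0$ almost everywhere.

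The only place requiring care will be the $L^2$-bookkeeping in the change of variables $u = r^2$ together with the weight shift from $r^{2\beta+1}$ to $u^{\beta+k}$: one must check that $g_k$ really lands in the correct weighted Laguerre space so that the completeness theorem applies, and that the Fubini/Parseval step interchanging the $\theta$-Fourier series with the radial integration is justified. Apart from this, the argument is a textbook reduction of 2D completeness to the 1D case, and I expect no other significant obstacle.
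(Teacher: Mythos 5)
Your argument is correct, but it is worth noting that the paper does not actually supply a proof of Theorem~\ref{Thm5.3}: it consists of the single sentence ``The proof follows similarly to the proof of the 2D Hermite polynomial case \cite{In} (p.\ 403),'' i.e.\ a deferral to the analogous completeness result for the 2D complex Hermite polynomials. What you have written is, in effect, the argument the authors are gesturing at, adapted from the Hermite to the Laguerre setting and spelled out in full. Your orthonormality step is exactly the intended specialization of Theorem~\ref{Thm3.1} (with $\phi_n(x,\alpha)=L_n^{(\alpha)}(x)$, $d\mu(r)=e^{-r}dr$, $\zeta_n(\alpha)=\Gamma(n+\alpha+1)/n!$, and the check that $d\nu(\theta)\,d\mu(r^2,\beta)$ equals $r^{2\beta+1}e^{-r^2}\,dr\,d\theta/\pi$ on the nose), and your completeness step --- decomposing $f$ into Fourier modes $f_k(r)e^{ik\theta}$, observing that $Z_{m,n}^{(\beta)}=r^{|m-n|}e^{i(m-n)\theta}L_{m\wedge n}^{(\beta+|m-n|)}(r^2)$ lives on the single mode $k=m-n$, substituting $u=r^2$, and invoking completeness of $\{L_n^{(\beta+|k|)}\}$ in $L^2\bigl((0,\infty),u^{\beta+|k|}e^{-u}du\bigr)$ for each $k$ --- is the standard reduction and is sound; the weight bookkeeping ($\|g_k\|^2$ in the weighted Laguerre space being a constant multiple of $\int_0^\infty|f_k(r)|^2r^{2\beta+1}e^{-r^2}dr$, finite by Bessel's inequality) works out as you indicate, and the case $k<0$ follows from \eqref{eq5.2} by conjugation. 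The only thing your write-up buys beyond the paper is substantial: a self-contained proof where the paper offers only a citation.
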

\begin{proof}
The proof follows similarly to the proof of the 2D Hermite polynomial case \cite{In} (p. 403).
\end{proof}
\begin{theorem}\label{Thm5.4}
In the Hilbert space  $L^2(\mathbb{C},d\nu(z,\oz))$, for $m\geq n$, we have
\begin{equation}\label{eq5.11}
[a_1,a_2]=0\quad\text{and}\quad [a_i,a_j^{\dagger}]=\delta_{i,j};\quad i,j=1,2.
\end{equation}
\end{theorem}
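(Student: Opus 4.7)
The plan is to verify these commutation identities by a direct algebraic computation at the level of the first-order differential operators themselves, rather than through their actions on the basis $\mathcal{B}$. The verification rests on the elementary facts that $[\partial_z, z] = [\partial_{\oz}, \oz] = 1$ and $[\partial_z, \partial_{\oz}] = 0$, that any two of the multiplication operators $z$, $\oz$, and $\beta/z$ commute with one another, and that $\partial_{\oz}$ annihilates every function of $z$ alone while $\partial_z$ annihilates every function of $\oz$ alone. Five commutators need to be checked in all.

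For $[a_1, a_2] = [\beta/z + \partial_z,\, -\partial_{\oz}]$ every term in $a_1$ commutes with $\partial_{\oz}$, so the bracket is zero. Expanding $[a_1, a_1^{\dagger}] = [\beta/z + \partial_z,\, z - \partial_{\oz}]$, all cross-commutators drop except $[\partial_z, z] = 1$. The computation for $[a_2, a_2^{\dagger}] = [-\partial_{\oz},\, \beta/z + \partial_z - \oz]$ is symmetric: the derivative $\partial_{\oz}$ kills both $\beta/z$ and $\partial_z$, leaving only $[\partial_{\oz}, \oz] = 1$. For the mixed pair, $[a_1, a_2^{\dagger}] = [\beta/z + \partial_z,\, \beta/z + \partial_z - \oz] = -[\beta/z + \partial_z,\, \oz]$, which vanishes because $\oz$ commutes with both $\beta/z$ and $\partial_z$. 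Finally, $[a_2, a_1^{\dagger}] = [-\partial_{\oz},\, z - \partial_{\oz}] = -[\partial_{\oz}, z] = 0$.

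The only conceptual subtlety, and the one thing that genuinely needs attention, is that the multiplication operator $\beta/z$ is singular at the origin, so the above equalities must be read as operator identities on a common dense invariant domain in $L^2(\mathbb{C}, d\nu(z, \oz))$. Theorem~\ref{Thm5.1} already shows that the linear span of the basis elements $Z_{m,n}^{(\beta)}$ with $m \geq n$ is stable under each of $a_1$, $a_2$, $a_1^{\dagger}$, $a_2^{\dagger}$, providing the natural invariant subspace on which all four operators and their compositions are simultaneously well defined; this is presumably what the qualifier ``for $m \geq n$'' in the statement is recording. Beyond this bookkeeping, no real obstacle is expected.
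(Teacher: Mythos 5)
Your computation is correct, but it takes a genuinely different route from the paper. The paper's proof is the one-line remark that the relations ``can easily be verified using the results of Theorem~\ref{Thm5.1}'', i.e.\ one applies each commutator to the basis polynomials and reads off the coefficients: for instance $a_1a_1^{\dagger}Z_{m,n}^{(\beta)}=(\beta+m+1)Z_{m,n}^{(\beta)}$ while $a_1^{\dagger}a_1Z_{m,n}^{(\beta)}=(\beta+m)Z_{m,n}^{(\beta)}$, so $[a_1,a_1^{\dagger}]$ acts as the identity on the (dense) span of the $Z_{m,n}^{(\beta)}$. You instead expand the commutators symbolically from \eqref{eq5.7}--\eqref{eq5.8} using $[\partial_z,z]=[\partial_{\oz},\oz]=1$ and the vanishing of all cross terms; this is more elementary, does not rely on the ladder formulas at all, and makes transparent \emph{why} the relations hold (only the two canonical pairs survive). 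What the paper's route buys is that it sidesteps the singularity of the multiplication operator $\beta/z$ entirely, since one only ever evaluates on explicit polynomials where Theorem~\ref{Thm5.1} has already done the work. Your closing remark on domains is the right instinct but is stated slightly inaccurately: the span of $\{Z_{m,n}^{(\beta)}:m\geq n\}$ is \emph{not} stable under all four operators, because $a_1$ lowers $m$ and $a_2^{\dagger}$ raises $n$, so applying them at the boundary $m=n$ produces $Z_{m-1,n}^{(\beta)}$ or $Z_{m,n+1}^{(\beta)}$ with first index smaller than the second; the restriction ``for $m\geq n$'' in the statement is really recording where the formulas of Theorem~\ref{Thm5.1} (and hence the verification) apply, not an invariant subspace. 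Since your argument is a formal identity of differential operators valid on any reasonable common domain avoiding the origin, this does not damage the proof, but the invariance claim as written should be dropped or corrected.
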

\begin{proof}
It can easily be verified using the results of Theorem \ref{Thm5.1}.
\end{proof}
\subsection{The quaternion version}
In general, different quaternions do not commute. However, $\qu$ and $\oqu$ commute and real numbers commute with quaternions. Therefore,  for $m,n\in\mathbb{N}$, we can staraightforwardly write,
\begin{align}\label{eq5.12}
Z_{m,n}^{(\beta)}(\qu,\oqu)
&=\dfrac{(-1)^{(m+n-|m-n|)/2}}{\left(({m+n-|m-n|})/{2}\right)!} \,\qu^m\,\oqu^n\notag\\
&\times {}_2F_0\left(-\frac{m+n-|m-n|}{2},-\beta-\dfrac{m+n+|m-n|}{2};-;-\dfrac{1}{\qu\oqu}\right),
\end{align}
In particular for $m\geq n$,
\begin{align}\label{eq5.13}
Z_{m,n}^{(\beta)}(\qu,\oqu)
&=\dfrac{(-1)^{n}}{n!} \left(-\beta-m\right)_{n} \,\qu^{m-n}\,{}_1F_1\left(-n;\beta+m-n+1;{\qu\oqu}\right).
\end{align}
Using the recurrence relations of the confluent hypergeometric functions \cite{rainville}, such definition \eqref{eq5.13} allow us to  obtain several recurrence relations for $Z_{m,n}^{(\beta)}(\qu,\oqu)$, for example we give
\begin{align}
Z_{m+1,n}^{(\beta)}(\qu,\oqu)&=\qu\,  Z_{m,n}^{(\beta+1)}(\qu,\oqu),\label{eq5.14}\\
Z_{m,n}^{(\beta)}(\qu,\oqu)
&=Z_{m-1,n-1}^{(\beta)}(\qu,\oqu)+\qu\, Z_{m-1,n}^{(\beta)}(\qu,\oqu)\label{eq5.15}\\
(\beta+m)Z_{m-1,n}^{(\beta)}(\qu,\oqu)
&=(n+1) 
Z_{m,n+1}^{(\beta)}(\qu,\oqu)+\oqu\, Z_{m,n}^{(\beta)}(\qu,\oqu).\label{eq5.16}\\
(\beta+m)\,\qu\, Z_{m-1,n}^{(\beta)}(\qu,\oqu)&=(\beta+m-n)\,Z_{m,n}^{(\beta)}(\qu,\oqu)-\oqu\, Z_{m,n-1}^{(\beta)}(\qu,\oqu).\label{eq5.17}
\end{align}
\begin{theorem}\label{Thm4.5}\label{Thm4.5} For $m\geq n$, the polynomials $Z_{
m,n}^{\beta} (\qu,\oqu)$ satisfy the differential recurrence relations
\begin{align}
\qu\,\partial_{\qu} Z_{m,n}^{(\beta)}(\qu,\oqu)&=(m-n) Z_{m,n}^{(\beta)}(\qu,\oqu)-\oqu \, Z_{m,n-1}^{(\beta)}(\qu,\oqu),\label{eq5.18}\\
\qu\,\partial_{\qu} Z_{m,n}^{(\beta)}(\qu,\oqu)&=m Z_{m,n}^{(\beta)}(\qu,\oqu)-(\beta+m) Z_{m-1,n-1}^{(\beta)}(\qu,\oqu),\label{eq5.19}\\
\oqu\, \partial_{\oqu}\,  Z_{m,n}^{(\beta)}(\qu,\oqu)&=-\oqu\,  Z_{m,n-1}^{(\beta)}(\qu,\oqu),\label{eq5.20}\\
\oqu\, \partial_{\oqu} \, Z_{m,n}^{(\beta)}(\qu,\oqu)&=n\,  Z_{m,n}^{(\beta)}(\qu,\oqu)-(\beta+m) \, Z_{m-1,n-1}^{(\beta)}(\qu,\oqu), \label{eq5.21}\\
\left(\qu\, \partial_{\qu} -\oqu\, \partial_{\oqu} \right) Z_{m,n}^{(\beta)}(\qu,\oqu)
&=(m-n) Z_{m,n}^{(\beta)}(\qu,\oqu),\label{eq5.22}\\
\left(n\qu\, \partial_{\qu} -m\oqu\, \partial_{\oqu} \right) Z_{m,n}^{(\beta)}(\qu,\oqu)
&=(m-n)(\beta+m) Z_{m-1,n-1}^{(\beta)}(\qu,\oqu).\label{eq5.23}
\end{align}
\end{theorem}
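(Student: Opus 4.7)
The plan is to compute directly from the explicit power-series representation, the quaternionic counterpart of \eqref{eq5.1},
$$Z_{m,n}^{(\beta)}(\qu,\oqu)=\sum_{k=0}^{n}\frac{(-1)^{n-k}(\beta+1)_m}{k!(n-k)!(\beta+1)_{m-k}}\,\qu^{m-k}\oqu^{n-k}\qquad(m\geq n),$$
which is legitimate because $\qu$ commutes with $\oqu$, as noted at the start of \S 5.1; consequently monomials of fixed bidegree can be freely regrouped and the real coefficients slide through all quaternionic factors. By \eqref{eq2.7}, $\partial_{\qu}$ sends $\qu^{m-k}\mapsto(m-k)\qu^{m-k-1}$ and analogously $\partial_{\oqu}$ sends $\oqu^{n-k}\mapsto(n-k)\oqu^{n-k-1}$; undifferentiated factors move past the derivative freely. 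The proof then reduces to algebraic re-indexing in $k$.

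For \eqref{eq5.18} I apply $\qu\,\partial_{\qu}$, which puts a factor $(m-k)$ in front of each term, and split $(m-k)=(m-n)+(n-k)$. The first piece is exactly $(m-n)Z_{m,n}^{(\beta)}(\qu,\oqu)$; the second piece, after dropping the vanishing $k=n$ term and pulling a left factor of $\oqu$ out, is a sum over $0\le k\le n-1$ whose sign $(-1)^{n-k}=-(-1)^{(n-1)-k}$ identifies it as $-\oqu\,Z_{m,n-1}^{(\beta)}(\qu,\oqu)$. For \eqref{eq5.19} I instead use the split $(m-k)=m-k$: the $k=0$ term drops out of the second sum, and after shifting $j=k-1$ the Pochhammer identity $(\beta+1)_m=(\beta+m)(\beta+1)_{m-1}$ recognises the remaining sum as $(\beta+m)\,Z_{m-1,n-1}^{(\beta)}(\qu,\oqu)$. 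The derivations of \eqref{eq5.20} and \eqref{eq5.21} are entirely parallel, with $\oqu\,\partial_{\oqu}$ producing a factor $(n-k)$ in place of $(m-k)$; keeping this factor intact and pulling $\oqu$ out yields \eqref{eq5.20} at once (the very sum that already appeared in the derivation of \eqref{eq5.18}), while the split $(n-k)=n-k$ followed by the shift $j=k-1$ yields \eqref{eq5.21}.

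The last two relations are immediate linear combinations: \eqref{eq5.22} is $\eqref{eq5.19}-\eqref{eq5.21}$, and \eqref{eq5.23} is $n\cdot\eqref{eq5.19}-m\cdot\eqref{eq5.21}$, in which the $(\beta+m)Z_{m-1,n-1}^{(\beta)}$ terms combine with coefficient $(m-n)(\beta+m)$. The main obstacle is conceptual rather than computational: one must justify that $\partial_{\qu}$ and $\partial_{\oqu}$ act term-by-term on the mixed monomials $\qu^{a}\oqu^{b}$ exactly as in the complex case. This is legitimised by restricting to an arbitrary slice $\mathbb{C}_I$, on which Definition \ref{D3} reduces $\partial_{\qu}$ to the standard complex partial derivative in $x+Iy$ with the conjugate variable treated as independent; because $\qu$ and $\oqu$ commute, the slicewise identities lift unambiguously to all of $H$, and the remaining algebra is formally identical to the complex Laguerre calculation.
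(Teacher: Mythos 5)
Your proposal is correct and follows essentially the same route as the paper: termwise application of the Cullen derivatives to the explicit series $\sum_{k=0}^{n}\frac{(-1)^{n-k}(\beta+1)_m}{k!(n-k)!(\beta+1)_{m-k}}\qu^{m-k}\oqu^{n-k}$, the split $(m-k)=(m-n)+(n-k)$ with re-indexing for \eqref{eq5.18}, and analogous manipulations (which the paper leaves as ``similar lines'') for the remaining identities. Your extra care in justifying the slicewise action of $\partial_{\qu}$, $\partial_{\oqu}$ and in deriving \eqref{eq5.22}--\eqref{eq5.23} as explicit linear combinations of \eqref{eq5.19} and \eqref{eq5.21} only fills in details the paper omits.
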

\begin{proof} Since
\begin{align*}
\partial_{\qu}& Z_{m,n}^{(\beta)}(\qu,\oqu)
=\sum_{k=0}^{n}\frac{(-1)^{n-k}(\beta+1)_m}{k!(n-k)!(\beta+1)_{m-k}}(m-k)\qu^{m-k-1}\oqu^{n-k}\\
&=m\qu^{-1}\sum_{k=0}^{n}\frac{(-1)^{n-k}(\beta+1)_m}{k!(n-k)!(\beta+1)_{m-k}}\qu^{m-k}\oqu^{n-k}+\qu^{-1}\sum_{k=0}^{n}\frac{(-1)^{n-k}(\beta+1)_m(-k)}{k!(n-k)!(\beta+1)_{m-k}}\qu^{m-k}\oqu^{n-k}\\
&=(m-n)\qu^{-1}\sum_{k=0}^{n}\frac{(-1)^{n-k}(\beta+1)_m}{k!(n-k)!(\beta+1)_{m-k}}\qu^{m-k}\oqu^{n-k}+\qu^{-1}\sum_{k=0}^{n-1}\frac{(-1)^{n-k}(\beta+1)_m}{k!(n-k-1)!(\beta+1)_{m-k}}\qu^{m-k}\oqu^{n-k}\\
&=(m-n)\qu^{-1}\sum_{k=0}^{n}\frac{(-1)^{n-k}(\beta+1)_m}{k!(n-k)!(\beta+1)_{m-k}}\qu^{m-k}\oqu^{n-k}-\qu^{-1}\oqu\sum_{k=0}^{n-1}\frac{(-1)^{n-1-k}(\beta+1)_m}{k!(n-k-1)!(\beta+1)_{m-k}}\qu^{m-k}\oqu^{n-1-k}\\
&=(m-n)\qu^{-1}Z_{m,n}^{\beta}(\qu,\oqu)-\qu^{-1}\oqu\, Z_{m,n-1}^{(\beta)}(\qu,\oqu).
\end{align*}
which prove the equation \eqref{eq5.18}. The proof of the other equations follows along similar lines.
\end{proof}
\vskip0.1true in
\noindent  Let $L^2(H,w(x,y)\,d\nu(z,\oz)d\omega(u_{\qu}))$ is a left quaternionic Hilbert space with the left scalar product
\begin{equation}\label{eq5.24}
\langle f|g\rangle=\int_{H}\,f(\qu)\,\overline{g(\qu)}d\nu(z,\oz)d\omega(u_{\qu}).
\end{equation}
For $m\geq n$, using the Cullen derivative if we defined 
\begin{equation}\label{eq5.25}
b_1=\qu^{-1}\beta+\partial_{\qu}\quad{\text{ and}}\quad b_2=-\partial_{\oqu},
\end{equation} 
we have the adjoint relations
\begin{equation}\label{eq5.26}
b_1^{\dagger}=\qu-\partial_{\oqu}\quad \text{and}\quad b_2^{\dagger}=\qu^{-1}\beta+\partial_{\qu}-\oqu.
\end{equation}
where their actions on the polynomials $Z_{m,n}^{(\beta)}(\qu,\oqu)$ are given by the following theorem.
\begin{theorem}\label{Thm5.6}
For $m\geq n$, 
\begin{align}
b_1Z_{m,n}^{(\beta)}(\qu,\oqu)&=(\beta+m)Z_{m-1,n}^{(\beta)}(\qu,\oqu),\label{eq5.27}\\
b_1^{\dagger}Z_{m,n}^{(\beta)}(\qu,\oqu)&=Z_{m+1,n}^{(\beta)}(\qu,\oqu),\label{eq5.28}\\
b_2Z_{m,n}^{(\beta)}(\qu,\oqu)&=Z_{m,n-1}^{(\beta)}(\qu,\oqu),\label{eq5.29}\\
b_2^{\dagger}Z_{m,n}^{(\beta)}(\qu,\oqu)&=(n+1)Z_{m,n+1}^{(\beta)}(\qu,\oqu).\label{eq5.30}
\end{align}
\end{theorem}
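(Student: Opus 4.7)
The plan is to reduce the proof to the computation already carried out for Theorem~\ref{Thm5.1}. Two observations make this straightforward: first, since $\qu$ and $\oqu$ commute and real scalars commute with all quaternions, the series (\ref{eq5.1}) continues to define $Z_{m,n}^{(\beta)}(\qu,\oqu)$ term by term,
$$
Z_{m,n}^{(\beta)}(\qu,\oqu)=\sum_{k=0}^{n}\frac{(-1)^{n-k}(\beta+1)_m}{k!(n-k)!(\beta+1)_{m-k}}\,\qu^{m-k}\oqu^{n-k},
$$
for $m\geq n$. Second, the Cullen derivative satisfies $\partial_\qu \qu^j = j\qu^{j-1}$ by (\ref{eq2.7}); treating $\oqu$ as a commuting constant under $\partial_\qu$ (as was done in the proof of Theorem~\ref{Thm4.5}), one has $\partial_\qu(\qu^{m-k}\oqu^{n-k})=(m-k)\qu^{m-k-1}\oqu^{n-k}$, and symmetrically for $\partial_\oqu$.

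Given these, each of (\ref{eq5.27})--(\ref{eq5.30}) follows by applying the corresponding operator term by term to the series and rearranging via the Pochhammer identities $(\beta+m-k)(\beta+1)_{m-k-1}=(\beta+1)_{m-k}$ and $(\beta+1)_m=(\beta+m)(\beta+1)_{m-1}$, followed by an index shift. Concretely, $b_1=\qu^{-1}\beta+\partial_\qu$ produces a sum whose $k$-th term carries the factor $(\beta+m-k)$, and reassembling yields $(\beta+m)\,Z_{m-1,n}^{(\beta)}(\qu,\oqu)$; the operator $b_1^\dagger=\qu-\partial_\oqu$ delivers $Z_{m+1,n}^{(\beta)}(\qu,\oqu)$ once one combines the $\qu$-multiplication with the $\partial_\oqu$ contribution using the identities underlying the recurrences (\ref{eq5.14})--(\ref{eq5.17}). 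The relations for $b_2$ and $b_2^\dagger$ are the symmetric $\oqu$-side computations, where the factor $(n+1)$ appearing in (\ref{eq5.30}) emerges from the index shift.

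The only conceivable obstacle is non-commutativity of the quaternionic factors interfering with the rearrangements, but this never arises: every monomial in the expansion involves only $\qu$, $\oqu$, and real scalars, all of which commute mutually, and the Cullen derivatives $\partial_\qu,\partial_\oqu$ preserve this structure. Consequently the computation is algebraically identical to the one proving Theorem~\ref{Thm5.1}, with $\partial_z,\partial_\oz$ replaced by $\partial_\qu,\partial_\oqu$ acting on the same monomials in the same way.
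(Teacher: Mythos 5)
Your proposal is correct in substance but follows a different route from the paper's. The paper does not redo the series computation for Theorem~\ref{Thm5.6}: having already established the differential recurrences of Theorem~\ref{Thm4.5} (notably \eqref{eq5.19} and \eqref{eq5.20}) and the algebraic recurrences \eqref{eq5.14}--\eqref{eq5.17}, it obtains all four identities by short rearrangements --- e.g.\ \eqref{eq5.19} combined with \eqref{eq5.15} yields \eqref{eq5.27} in two lines, \eqref{eq5.20} gives \eqref{eq5.29} at once, and \eqref{eq5.16} gives \eqref{eq5.30}. You instead differentiate the monomial expansion term by term and reassemble via Pochhammer identities; this is essentially the computation the paper performs once (in proving \eqref{eq5.18}) and then avoids repeating. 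Both arguments rest on the same convention --- that the Cullen derivative $\partial_{\qu}$ acts on $\qu^{m-k}\oqu^{n-k}$ with $\oqu$ held fixed, and that everything in sight ($\qu$, $\oqu$, real scalars) commutes --- so neither is more rigorous on that point; yours is more self-contained, the paper's is shorter given Theorem~\ref{Thm4.5}. One small imprecision to fix: $b_2^{\dagger}=\qu^{-1}\beta+\partial_{\qu}-\oqu$ is not the ``$\oqu$-side mirror'' of anything, since it involves $\partial_{\qu}$ rather than $\partial_{\oqu}$; so \eqref{eq5.30} does not follow by symmetry. It follows either by combining your already-proved \eqref{eq5.27} with the recurrence \eqref{eq5.16}, or by a direct coefficient comparison in which the identity $\frac{1}{(k-1)!\,(n+1-k)!}+\frac{1}{k!\,(n-k)!}=\frac{n+1}{k!\,(n+1-k)!}$ produces the factor $n+1$. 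With that step made explicit, your argument is complete.
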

\begin{proof} Equation \eqref{eq5.19} can be written as
\begin{align*}
\qu\,\partial_{\qu} Z_{m,n}^{(\beta)}(\qu,\oqu)+\beta \,Z_{m,n}^{(\beta)}(\qu,\oqu)&=(\beta+m)\left(Z_{m,n}^{(\beta)}(\qu,\oqu)- Z_{m-1,n-1}^{(\beta)}(\qu,\oqu)\right)\\
&=(\beta+m)\qu\, Z_{m-1,n}^{(\beta)}(\qu,\oqu)
\end{align*}
where we used \eqref{eq5.15}. Thus,
$
\left(\partial_{\qu}+\qu^{-1}\beta\right) \,Z_{m,n}^{(\beta)}(\qu,\oqu)
=(\beta+m)\, Z_{m-1,n}^{(\beta)}(\qu,\oqu)
$
as shown in \eqref{eq5.27}. Equation \eqref{eq5.29} follows immediately from \eqref{eq5.20}. On other hand, using \eqref{eq5.16}, we have
\begin{align*}
\qu\,\partial_{\qu} Z_{m,n}^{(\beta)}(\qu,\oqu)+\beta \,Z_{m,n}^{(\beta)}(\qu,\oqu)&=(n+1)\,\qu Z_{m,n+1}^{(\beta)}(\qu,\oqu)+\qu\oqu  Z_{m,n-1}^{(\beta)}(\qu,\oqu)
\end{align*}
that implies
\begin{align*}
\left(\qu\,\partial_{\qu} +\beta - \qu\oqu\right) \,Z_{m,n}^{(\beta)}(\qu,\oqu)&=(n+1)\,\qu Z_{m,n+1}^{(\beta)}(\qu,\oqu)
\end{align*}
which prove the equation \eqref{eq5.30}. Equation \eqref{eq5.28} follows using \eqref{eq5.20} along with equation \eqref{eq5.15}.
\end{proof}
\noindent Also as in Theorem \ref{Thm5.4} we have
$$[b_1,b_2]=0\quad \text{and}\quad [b_i,b_j^{\dagger}]=\delta_{ij};\quad i,j=1,2.$$
\begin{remark}
Let $a,b\in H$ be constant quaternions and $\qu\in H$, and consider the function $f(\qu, \overline{\qu})=a\qu b\overline{\qu}$. Then
\begin{eqnarray*}
\langle a\qu b\overline{\qu}|a\qu b\overline{\qu}\rangle&=&\int_H a\qu b\overline{\qu}~\overline{a\qu b\overline{\qu}}~d\nu(z,\oz)d\omega(u_{\qu})\\
&=&|a|^2|b|^2\int_{SU(2)}\int_{0}^{\infty}\int_{0}^{2\pi}|\qu|^2|\overline{\qu}|^2
w(z,\oz)d\nu(z,\oz)d\omega(u_{\qu})\\
&=&\frac{|a|^2|b|^2}{\pi}\int_{0}^{\infty}\int_{0}^{2\pi}r^{5+2\beta}e^{-r^2}drd=\frac{|a|^2|b|^2}{\pi}<\infty,
\end{eqnarray*}
and thereby  $f(\qu, \overline{\qu})\in L^2(H, d\nu(z,\oz)d\omega(u_{\qu})).$ It is clear that, due to the non-commutativity, $f(\qu, \overline{\qu})$ cannot be written as a linear combination of $\qu,\overline{\qu}, \qu \overline{\qu}, \overline{\qu}\qu$ or any $Z^{(\beta)}_{n,m}(\qu,\overline{\qu})$. Thereby, $\{Z^{(\beta)}_{n,m}(\qu,\overline{\qu})~|~m,n=0,1,2,...\}$ cannot be dense in $L^2(\mathbb{C},d\nu(z,\oz)d\omega(u_{\qu}))$.
This example suggest that there cannot be a polynomial basis for any quaternionic $L^2$ space. In fact the quaternionic $L^2$ spaces are so bigger in comparison to their real or complex counterparts.\end{remark}
\section{Some useful summation formulas}
\noindent Using the 2D complex Hermite polynomials several interesting mathematical and physical phenomenas have been studied in the recent literature. As examples, we list: coherent state quantization \cite{ABG, CGG, GS}, reproducing kernel Hilbert spaces \cite{ABG}, and pseudo-bosons \cite{ABG}. All these applications rely on the lowering and rasing operators of the polynomials, and some key summation and integral formulas associated with the 2D Hermite polynomials. In the same sprit we shall briefly investigate the same issues using the generalized Laguerre polynomials. In order to do so, we first study some summation formulas.
\begin{theorem}\label{Thm6.1}
For $\beta\geq 0$ and $m\geq n$, we have
\begin{align}\label{eq6.1}
\sum_{m=n}^{\infty}\frac{\Z(z,\oz)\overline{\Z(z,\oz)}}{C(m,n,\beta)}&=\dfrac{(1+\beta)_n}{n!\,\Gamma(\beta+1)}\sum_{k=0}^{n}\sum_{k'=0}^{n}\dfrac{(-n)_{k}(-n)_{k'}}
{k! k'!(1+\beta)_k(1+\beta)_{k'}} (z\oz)^{k+k'}\notag\\
&\times {}_2F_2(1,\beta+n+1;\beta+k+1,\beta+k'+1;z\oz)=E_n^{(\beta)}(z,\oz)~~\mbox{(say)}.
\end{align}
where $C(m,n,\beta)={\Gamma(\beta+m+1)}/{n!}={(\beta+1)_m\Gamma(\beta+1)}/{n!}$.
\end{theorem}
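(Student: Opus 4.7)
\textbf{Proof plan for Theorem \ref{Thm6.1}.} The strategy is to expand both factors in the numerator as series in $z\oz$, and after dividing by $C(m,n,\beta)$ interchange the order of summation so that the infinite sum over $m$ collapses into a single ${}_2F_2$. The key starting point is the definition $Z_{m,n}^{(\beta)}(z,\oz)=z^{m-n}L_n^{(\beta+m-n)}(z\oz)$ for $m\ge n$, together with the identity $\overline{z^{m-n}}=\oz^{\,m-n}$ and the reality of $L_n^{(\beta+m-n)}(z\oz)$. These give
\[
Z_{m,n}^{(\beta)}(z,\oz)\,\overline{Z_{m,n}^{(\beta)}(z,\oz)}
=(z\oz)^{m-n}\bigl[L_n^{(\beta+m-n)}(z\oz)\bigr]^{2},
\]
so the $z,\oz$ dependence is packaged entirely in $x:=z\oz$, which simplifies the analysis.

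Next I would substitute the confluent form $L_n^{(a)}(x)=\tfrac{(a+1)_n}{n!}\,{}_1F_1(-n;a+1;x)$ (with $a=\beta+m-n$) into each factor, and multiply the two resulting ${}_1F_1$ series to get a double sum indexed by $k,k'\in\{0,\dots,n\}$. After combining with the $(z\oz)^{m-n}$ factor and with $C(m,n,\beta)^{-1}=n!/[(\beta+1)_m\,\Gamma(\beta+1)]$, I would reindex the $m$-sum by $p:=m-n\ge 0$ so that the $(z\oz)^{k+k'}$ factor factors out of the $p$-sum exactly as it appears on the right-hand side. At this stage, the double sum over $k,k'$ can be pulled outside (absolute convergence is clear from $\beta\ge0$ and the bounded ranges $k,k'\le n$), leaving an inner power series in $x$ whose coefficients involve ratios of Pochhammer symbols.

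The main technical step is the simplification of that inner sum. Using the standard identities
\[
(\beta{+}p{+}1)_n=\frac{(\beta{+}1)_{n+p}}{(\beta{+}1)_p},\qquad
(\beta{+}p{+}1)_k=\frac{(\beta{+}1)_{p+k}}{(\beta{+}1)_p},\qquad
(\beta{+}1)_{n+p}=(\beta{+}1)_n(\beta{+}n{+}1)_p,
\]
one finds that the two squared $(\beta{+}p{+}1)_n$ from the prefactors cancel cleanly against $(\beta{+}1)_{n+p}$ in the denominator (from $C(m,n,\beta)^{-1}$) and against the reciprocal Pochhammers $1/(\beta{+}p{+}1)_k$, $1/(\beta{+}p{+}1)_{k'}$ arising in the ${}_1F_1$ expansions, leaving
\[
\sum_{p=0}^{\infty}\frac{(\beta{+}n{+}1)_p}{(\beta{+}k{+}1)_p\,(\beta{+}k'{+}1)_p}\,x^{p}
=\sum_{p=0}^{\infty}\frac{(1)_p(\beta{+}n{+}1)_p}{(\beta{+}k{+}1)_p(\beta{+}k'{+}1)_p}\,\frac{x^{p}}{p!},
\]
which is precisely ${}_2F_2(1,\beta{+}n{+}1;\beta{+}k{+}1,\beta{+}k'{+}1;z\oz)$. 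Collecting the surviving prefactor $(\beta{+}1)_n/[n!\,\Gamma(\beta{+}1)]$ matches the stated formula.

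The only obstacle in this proof is purely bookkeeping: making sure every Pochhammer identity is applied in the right direction so that all $p$-dependent Pochhammer symbols except those needed for the ${}_2F_2$ cancel out. Convergence of the interchange of summation is not an issue because the $k,k'$ sums are finite and, for fixed $k,k'$, the $p$-series is dominated termwise by $\sum_p(\beta{+}n{+}1)_p\,|x|^p/[(\beta{+}k{+}1)_p(\beta{+}k'{+}1)_p]$, which is entire in $x$.
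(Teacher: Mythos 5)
Your proposal is correct and follows essentially the same route as the paper's proof: the authors likewise write $Z_{m,n}^{(\beta)}$ via ${}_1F_1(-n;\beta+m-n+1;z\oz)$, shift the summation index to $j=m-n$, expand each ${}_1F_1$ as a finite sum over $k,k'$, and interchange summations to identify the inner $j$-series as the ${}_2F_2$. Your Pochhammer bookkeeping (using $(\beta+p+1)_n=(\beta+1)_{n+p}/(\beta+1)_p$ and $(\beta+1)_{p+k}=(\beta+1)_k(\beta+k+1)_p$) checks out and reproduces the stated prefactor.
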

\begin{proof}
From the series representation \eqref{eq5.1}, we have for $m\ge n$,
\begin{align*}
Z_{m,n}^{(\beta)}(z,\oz)
&=
\dfrac{(-1)^n}{n!} \dfrac{(-\beta)_n(1+\beta)_m}{(1-n+\beta)_m}z^{m-n}{}_1F_1(-n,\beta+m-n+1;z\oz).
\end{align*}
Thus
\begin{align*}
\sum_{m=n}^{\infty}\frac{\Z(z,\oz)\overline{\Z(z,\oz)}}{C(m,n,\beta)}
&=\dfrac{(-\beta)_n(-\beta)_n}{n!\,\Gamma(\beta+1)\, (z\oz)^n}\sum_{m=n}^{\infty}\dfrac{(1+\beta)_m}{(1-n+\beta)_m(1-n+\beta)_m} (z\oz)^m\\
&\times {}_1F_1(-n;\beta+m-n+1;z\oz) {}_1F_1(-n;\beta+m-n+1;z\oz).
\end{align*}
By shifting the indices with $j=m-n$, it follows
\begin{align*}
\sum_{m=n}^{\infty}\frac{\Z(z,\oz)\overline{\Z(z,\oz)}}{C(m,n,\beta)}
&=\dfrac{(-\beta)_n(-\beta)_n(1+\beta)_n(1+\beta)_{-n}(1+\beta)_{-n}}{n!\,\Gamma(\beta+1)\, }\sum_{j=0}^{\infty}\dfrac{(1+\beta+n)_{j}}{(1+\beta)_{j}(1+\beta)_{j}} (z\oz)^{j}\\
&\times {}_1F_1(-n;\beta+j+1;z\oz) {}_1F_1(-n;\beta+j+1;z\oz).
\end{align*}
Using the polynomial series representation of ${}_1F_1(-n;\alpha;z)$ and by switching  the summation order we finally obtain \eqref{eq6.1}.
\end{proof}
\vskip0.1 true in
\noindent Particularly, in \cite{ABG,CGG} the authors have used the Hermite polynomials $H_{r,s}(z,\oz)$ with $r\geq s$ and considered $H_{s+n,s}(z,\oz)$ with $r-s=n\in\mathbb{N}$. In the same manner the following sum will be useful.
\vskip0.1 true in
\begin{theorem}\label{Thm6.2}
For $\beta\geq 0$  we have
\begin{align}\label{eq6.2}
\sum_{s=0}^{\infty}\frac{Z^{(\beta)}_{s+n,s}(z,\oz)\overline{Z^{(\beta)}_{s+n,s}(z,\oz)}}{C(s,n,\beta)}&=\dfrac{(1+\beta)_n}{n!\,\Gamma(\beta+1)}\sum_{k=0}^{n}\sum_{k'=0}^{n}\dfrac{(-n)_{k}(-n)_{k'}}
{k! k'!(1+\beta)_k(1+\beta)_{k'}} (z\oz)^{k+k'}\notag\\
&\times {}_2F_2(1,\beta+n+1;\beta+k+1,\beta+k'+1;z\oz).
\end{align}
where $C(s,n,\beta)={\Gamma(\beta+s+n+1)}/{s!}$.
\end{theorem}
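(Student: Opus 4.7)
The plan is to mirror the proof of Theorem~\ref{Thm6.1} with the obvious reindexing. Using \eqref{eq5.13} with $m=s+n$ and the identity $(-\beta-s-n)_s=(-1)^s(1+\beta+n)_s$, one writes
\[
Z^{(\beta)}_{s+n,s}(z,\oz) = \frac{(1+\beta+n)_s}{s!}\, z^n\, {}_1F_1(-s;\beta+n+1;z\oz).
\]
Combined with $C(s,n,\beta) = \Gamma(\beta+n+1)(1+\beta+n)_s/s!$, the $s$-th summand simplifies to
\[
\frac{|Z^{(\beta)}_{s+n,s}|^2}{C(s,n,\beta)}
=\frac{(1+\beta+n)_s\,(z\oz)^n}{s!\,\Gamma(\beta+n+1)}\,\bigl[{}_1F_1(-s;\beta+n+1;z\oz)\bigr]^2.
\]

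Second, I would expand the square of the confluent-hypergeometric function via its polynomial representation
\[
\bigl[{}_1F_1(-s;\beta+n+1;z\oz)\bigr]^2 = \sum_{k,k'=0}^{s}\frac{(-s)_k(-s)_{k'}(z\oz)^{k+k'}}{(\beta+n+1)_k(\beta+n+1)_{k'}\,k!\,k'!},
\]
substitute into the sum over $s$, and interchange the order of summation so that $k,k'$ become outer and $s\ge\max(k,k')$ becomes inner. Using $(-s)_k=(-1)^k s!/(s-k)!$, the Pochhammer factorization $(1+\beta+n)_s=(1+\beta+n)_k(1+\beta+n+k)_{s-k}$, and an $s\mapsto s-k$ reindexing (assuming WLOG $k\ge k'$), the inner sum should be reduced to a hypergeometric series matching
\[
\frac{\Gamma(\beta+n+1)}{\Gamma(\beta+k+1)\Gamma(\beta+k'+1)}\;{}_2F_2\bigl(1,\beta+n+1;\beta+k+1,\beta+k'+1;z\oz\bigr),
\]
exactly as in the Theorem~\ref{Thm6.1} proof.

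Finally, converting the Gamma ratios into Pochhammer symbols via $\Gamma(\beta+k+1)=\Gamma(\beta+1)(1+\beta)_k$ and absorbing the overall prefactor $(1+\beta)_n/[n!\,\Gamma(\beta+1)]$ produces the right-hand side of \eqref{eq6.2}. The main obstacle will be the justification of the interchange of summations, since the inner $s$-sum requires careful termwise handling; as in the proof of Theorem~\ref{Thm6.1}, this is handled most transparently by matching coefficients of $(z\oz)^j$ on both sides, using that the right-hand ${}_2F_2$ is entire in $z\oz$, so the identity reduces to a closed combinatorial relation among Pochhammer symbols that can be verified directly from the definition of ${}_2F_2$.
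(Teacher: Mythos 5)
The paper's own proof of Theorem~\ref{Thm6.2} is a one-line reindexing: substituting $m=s+n$ into Theorem~\ref{Thm6.1} turns the sum over $m\geq n$ of $Z^{(\beta)}_{m,n}$ into a sum over $s\geq 0$ of $Z^{(\beta)}_{s+n,n}$ --- the \emph{second} index stays equal to $n$. Your proposal instead takes the printed statement at face value and works with $Z^{(\beta)}_{s+n,s}$, whose second index is the summation variable. These are genuinely different sums, and your own first step already exposes this: you correctly reduce the summand to
\[
\frac{(1+\beta+n)_s\,(z\oz)^n}{s!\,\Gamma(\beta+n+1)}\,\bigl[{}_1F_1(-s;\beta+n+1;z\oz)\bigr]^2,
\]
in which the numerator parameter $-s$ of the ${}_1F_1$ grows with the summation index, whereas in the proof of Theorem~\ref{Thm6.1} the relevant factors are the fixed-degree polynomials ${}_1F_1(-n;\beta+j+1;z\oz)$ and only the denominator parameter moves. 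The claim that the rest goes through ``exactly as in the Theorem~\ref{Thm6.1} proof'' is therefore not substantiated, and it cannot be: after expanding the square and interchanging summations, the outer indices $k,k'$ range over all non-negative integers (their upper bound $s$ is unbounded), not over $0\leq k,k'\leq n$, and the extraneous factor $(z\oz)^n$ survives, so the computation cannot land on the stated right-hand side.

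More decisively, the identity you are trying to prove is false as printed. For $n=0$ one has $Z^{(\beta)}_{s,s}(z,\oz)=L_s^{(\beta)}(z\oz)$ and $C(s,0,\beta)=\Gamma(\beta+s+1)/s!$, so the left-hand side becomes $\sum_{s\geq 0}\frac{s!}{\Gamma(\beta+s+1)}\bigl[L_s^{(\beta)}(z\oz)\bigr]^2$, whose terms decay only like $s^{-1/2}$ by the classical Fej\'er asymptotics for Laguerre polynomials; the series diverges, while the right-hand side reduces to the entire function $\Gamma(\beta+1)^{-1}\,{}_1F_1(1;\beta+1;z\oz)$. The statement is only correct under the reading that the authors' one-line proof presupposes: the summand should involve $Z^{(\beta)}_{s+n,n}$ with $C(s,n,\beta)=\Gamma(\beta+s+n+1)/n!$, in which case the theorem is literally Theorem~\ref{Thm6.1} after the substitution $m=s+n$ and nothing further needs proving. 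Your write-up should either make that trivial reduction or flag the typo; as it stands, the proposed argument pursues a divergent series and cannot be completed.
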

\begin{proof}
It proof follows from Theorem \ref{Thm6.1} by shifting $m=s+n$.
\end{proof}
\noindent In the next, we extend these results to the quaternions as follows.
\begin{theorem}\label{Thm6.3}
For $\beta\geq 0$ and $m\geq n$, we have
\begin{align}\label{eq6.3}
\sum_{m=n}^{\infty}\frac{\Z(\qu,\oqu)\overline{\Z(\qu,\oqu)}}{C(m,n,\beta)}&=E_n^{(\beta)}(z,\oz)\mathbb{I}_2,
\end{align}
where $C(m,n,\beta)={\Gamma(\beta+m+1)}/{n!}={(\beta+1)_m\Gamma(\beta+1)}/{n!}$.
\end{theorem}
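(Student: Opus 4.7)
The plan is to reduce the quaternionic sum to the complex sum already handled in Theorem \ref{Thm6.1}, using the block-diagonal matrix representation established in Definition \ref{Def4.1}.

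First, I would invoke equation \eqref{eq4.9}: for $m\geq n$,
\begin{equation*}
Z_{m,n}^{(\beta)}(\qu,\oqu)=u_{\qu}\begin{pmatrix} Z_{m,n}^{(\beta)}(z,\oz) & 0 \\ 0 & \overline{Z_{m,n}^{(\beta)}(z,\oz)} \end{pmatrix}u_{\qu}^{\dagger}.
\end{equation*}
Since the quaternionic conjugate coincides with matrix adjoint (see \eqref{eq2.2}) and $u_\qu\in SU(2)$ is unitary, taking the conjugate swaps the two diagonal entries and conjugates them:
\begin{equation*}
\overline{Z_{m,n}^{(\beta)}(\qu,\oqu)}=u_{\qu}\begin{pmatrix} \overline{Z_{m,n}^{(\beta)}(z,\oz)} & 0 \\ 0 & Z_{m,n}^{(\beta)}(z,\oz) \end{pmatrix}u_{\qu}^{\dagger}.
\end{equation*}

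Next I would multiply the two expressions. The inner factor $u_\qu^\dagger u_\qu=\mathbb{I}_2$ collapses, leaving a product of diagonal matrices whose two entries are $Z_{m,n}^{(\beta)}(z,\oz)\,\overline{Z_{m,n}^{(\beta)}(z,\oz)}$ and $\overline{Z_{m,n}^{(\beta)}(z,\oz)}\,Z_{m,n}^{(\beta)}(z,\oz)$. Because the entries of $Z$ are complex numbers, these two products coincide, so
\begin{equation*}
Z_{m,n}^{(\beta)}(\qu,\oqu)\,\overline{Z_{m,n}^{(\beta)}(\qu,\oqu)}=Z_{m,n}^{(\beta)}(z,\oz)\,\overline{Z_{m,n}^{(\beta)}(z,\oz)}\;u_{\qu}\mathbb{I}_2 u_{\qu}^{\dagger}=Z_{m,n}^{(\beta)}(z,\oz)\,\overline{Z_{m,n}^{(\beta)}(z,\oz)}\,\mathbb{I}_2.
\end{equation*}

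Finally, dividing by the real constant $C(m,n,\beta)$ (which commutes through $u_\qu$) and summing over $m\geq n$, the scalar factor equals $E_n^{(\beta)}(z,\oz)$ by Theorem \ref{Thm6.1}, yielding the claimed identity. I do not foresee a real obstacle here: the only point requiring care is verifying that the quaternionic conjugate of the matrix sandwich produces the swapped-and-conjugated diagonal above, which follows directly from $u_\qu^\dagger=u_\qu^{-1}$ and the definition of the adjoint.
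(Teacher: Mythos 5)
Your proposal is correct and follows essentially the same route as the paper: both use the block-diagonal representation $Z_{m,n}^{(\beta)}(\qu,\oqu)=u_{\qu}\,\mathrm{diag}\bigl(Z_{m,n}^{(\beta)}(z,\oz),\overline{Z_{m,n}^{(\beta)}(z,\oz)}\bigr)u_{\qu}^{\dagger}$, collapse the inner $u_{\qu}^{\dagger}u_{\qu}$, note the two diagonal products coincide since complex numbers commute, and invoke Theorem \ref{Thm6.1}. The only cosmetic difference is that you observe each summand is already a real multiple of $\mathbb{I}_2$ before summing, while the paper carries the $u_{\qu}(\cdot)u_{\qu}^{\dagger}$ sandwich through the sum and removes it at the end.
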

\begin{proof}
Since $\qu$, $\oqu$ commute and real numbers commute with quaternions, from the decomposition (\ref{eq2.6}) and the series representation \eqref{eq5.13} of the polynomials $\Z(\qu,\oqu)$ we have
$$\Z(\qu,\oqu)=u_{\qu}\left(\begin{array}{cc}
\Z(z,\oz)&0\\
0&\overline{\Z(z,\oz)}\end{array}\right)u_{\qu}^{\dagger}.$$
Thus
\begin{align*}
\sum_{m=n}^{\infty}\frac{\Z(\qu,\oqu)\overline{\Z(\qu,\oqu)}}{C(m,n,\beta)}&=u_{\qu}
\left(\begin{array}{cc}
\sum_{m=n}^{\infty}\frac{\Z(z,\oz)\overline{\Z(z,\oz)}}{C(m,n,\beta)}&0\\
0&\sum_{m=n}^{\infty}\frac{\overline{\Z(z,\oz)}\Z(z,\oz)}{C(m,n,\beta)}\end{array}\right)u_{\qu}^{\dagger}\\
&=u_{\qu}E_n^{(\beta)}(z,\oz)u_{\qu}^{\dagger}=E_n^{(\beta)}(z,\oz)\mathbb{I}_2,
\end{align*}
where $E_n^{(\beta)}(z,\oz)$ as in theorem \ref{Thm6.1}.
\end{proof}
\begin{remark} Following the proof of Theorem \ref{Thm6.3}, we can obtain Theorem \ref{Thm6.2} for the quaternionic polynomials $Z_{s+n,s}^{(\beta)}(\qu,\oqu)$ as well.
\end{remark}
\section{Some useful integral representations}
\noindent In this section we shall prove certain new integral formulas for the generalized complex polynomials and then indicate their extensions to their quaternionic counterpart. In fact, the types of these integrals play important role in quantization problems as discussed in next section.
\begin{theorem}\label{Thm7.1}
For $\beta\geq 0$ and $m\geq n$ we have
\begin{align}
\int_{\mathbb{R}^2}|z|^2 Z_{m,n}^{(\beta)}(z,\oz)&\overline{Z_{t,s}^{(\beta)}(z,\oz)}d\nu(z,\oz)=\frac{(-1)^{n+s}(-\beta-m)_n(-\beta-t)_s \Gamma(m-n+\beta+2)}{n!s!}\notag\\
&\times F_2(\beta+m-n+2,-n,-s;\beta+m-n+1,\beta+m-n+1;1,1)\,\delta_{m-n,t-s}\notag\\
&=G_{\beta}(m,n)\quad\mbox{(say)}
\end{align}
where $d\nu(z,\oz)$ is given right by
$$d\nu(z,\oz)=(z\oz)^{\beta}e^{-z\oz}\frac{dz\wedge d\oz}{i2\pi}=(x^2+y^2)^{2\beta}e^{-(x^2+y^2)}\frac{dxdy}{\pi}=r^{2\beta}e^{-r^2}\frac{rdrd\theta}{\pi}.$$
\end{theorem}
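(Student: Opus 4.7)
The plan is to substitute the confluent hypergeometric representation \eqref{eq5.3} for $Z_{m,n}^{(\beta)}$, pass to polar coordinates so that the angular integration produces the Kronecker delta $\delta_{m-n,t-s}$, and then recognise the remaining radial integral as the Laplace representation of the Appell function $F_2$.

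First, for $m\geq n$ and (the only non-vanishing range) $t\geq s$, I would write
\[
Z_{m,n}^{(\beta)}(z,\oz)=\frac{(-1)^n}{n!}(-\beta-m)_n\, z^{m-n}\,{}_1F_1(-n;\beta+m-n+1;z\oz),
\]
and, by \eqref{eq3.5}, take $\overline{Z_{t,s}^{(\beta)}(z,\oz)}$ to be the analogous expression with $\oz^{t-s}$ in place of $z^{t-s}$. Multiplying these together with $|z|^2=z\oz=r^2$ and the weight $r^{2\beta}e^{-r^2}$ coming from $d\nu(z,\oz)$, and writing $z=re^{i\theta}$, the entire $\theta$-dependence is carried by $e^{i((m-n)-(t-s))\theta}$, so
\[
\frac{1}{\pi}\int_0^{2\pi}e^{i((m-n)-(t-s))\theta}\,d\theta=2\,\delta_{m-n,\,t-s}.
\]
This yields the advertised Kronecker delta and forces $m-n=t-s=:p$ for the integral to be non-zero.

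With $p=m-n=t-s$ fixed, the remaining radial factor is
\[
2\int_0^\infty r^{2p+2\beta+3}\,e^{-r^2}\,{}_1F_1(-n;\beta+p+1;r^2)\,{}_1F_1(-s;\beta+p+1;r^2)\,dr,
\]
which the substitution $u=r^2$ reduces to
\[
\int_0^\infty u^{\beta+p+1}\,e^{-u}\,{}_1F_1(-n;\beta+p+1;u)\,{}_1F_1(-s;\beta+p+1;u)\,du.
\]
At this stage I would invoke the standard Laplace representation of the second Appell function,
\[
\int_0^\infty e^{-u}u^{a-1}\,{}_1F_1(b;c;xu)\,{}_1F_1(b';c';yu)\,du=\Gamma(a)\,F_2(a;b,b';c,c';x,y),
\]
with $a=\beta+p+2$, $b=-n$, $b'=-s$, $c=c'=\beta+p+1$ and $x=y=1$. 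This produces $\Gamma(\beta+m-n+2)\,F_2(\beta+m-n+2;-n,-s;\beta+m-n+1,\beta+m-n+1;1,1)$, and reassembling with the prefactor $(-1)^{n+s}(-\beta-m)_n(-\beta-t)_s/(n!\,s!)$ recovers the claimed identity.

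The calculation is essentially bookkeeping once the $F_2$ representation is in hand; I expect the recognition step to be the only genuine obstacle. One has to match the exponent $a-1=\beta+p+1$ with the power of $u$ supplied jointly by $|z|^2$, the weight $r^{2\beta}$, and the Jacobian, and to verify that both confluent factors carry the same argument $u$ so that the Appell parameters $x,y$ both equal $1$. The excluded cases (where $t<s$, or the symmetric scenarios with $m<n$) are handled automatically: the angular integration vanishes, as does the Kronecker delta on the right-hand side.
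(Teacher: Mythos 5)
Your proposal is correct and follows essentially the same route as the paper's appendix proof: polar coordinates, the ${}_1F_1$ representation \eqref{eq5.3}, the angular integral producing $\delta_{m-n,t-s}$, and the radial integral identified (after $u=r^2$) as $\Gamma(\beta+m-n+2)\,F_2(\beta+m-n+2;-n,-s;\beta+m-n+1,\beta+m-n+1;1,1)$. The only cosmetic difference is that you write out the Laplace-type integral representation of $F_2$ explicitly, whereas the paper simply cites it from the reference on Appell $F_2$ formulas.
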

\begin{theorem}\label{Thm7.2}
For $\beta\geq 0$ and $m\geq n$ we have
\begin{align}
\int_{\mathbb{R}^2}z\, Z_{m,n}^{(\beta)}(z,\oz)&\overline{Z_{t,s}^{(\beta)}(z,\oz)}d\nu(z,\oz)=\frac{(-1)^{n+s}(-\beta-m)_n(-\beta-t)_s \Gamma(m-n+\beta+2)}{n!s!}\notag\\
&\times F_2(\beta+m-n+2,-n,-s;\beta+m-n+1,\beta+m-n+2;1,1)\,\delta_{m-n+1,t-s}\notag\\
&=E_{\beta}(m,n)\quad\mbox{(say)}
\end{align}
\end{theorem}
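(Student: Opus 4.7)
The plan is to prove Theorem~\ref{Thm7.2} along the same lines as Theorem~\ref{Thm7.1}, the only new feature being that the extra factor of $z$ in the integrand shifts the angular selection rule by one. First, I observe that the asserted Kronecker delta forces $t-s=m-n+1\ge 1$, so in the nontrivial case $t>s$, and both polynomials admit the explicit power-series expansion \eqref{eq5.1}:
\begin{align*}
Z_{m,n}^{(\beta)}(z,\oz) &= \sum_{k=0}^n \frac{(-1)^{n-k}(\beta+1)_m}{k!(n-k)!(\beta+1)_{m-k}}\,z^{m-k}\oz^{n-k},\\
\overline{Z_{t,s}^{(\beta)}(z,\oz)} &= \sum_{l=0}^s \frac{(-1)^{s-l}(\beta+1)_t}{l!(s-l)!(\beta+1)_{t-l}}\,\oz^{t-l}z^{s-l},
\end{align*}
the conjugation merely swapping $z\leftrightarrow\oz$ since the coefficients are real.

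Passing to polar coordinates $z=re^{i\theta}$, so that $d\nu(z,\oz)=r^{2\beta}e^{-r^2}\,r\,dr\,d\theta/\pi$, each cross term in the resulting double sum carries an angular factor $e^{i\theta(m-n+1-t+s)}$. The $\theta$-integral accordingly vanishes unless $t-s=m-n+1$, producing the stated delta, while under this constraint the substitution $u=r^2$ reduces the radial integral to $\tfrac12\Gamma(\beta+m+s+2-k-l)$.

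The next step is a re-indexing $k'=n-k$, $l'=s-l$, which converts the $(n-k)!$ and $(s-l)!$ factorials into negative-integer Pochhammer symbols through $(n-k')!=(-1)^{k'}n!/(-n)_{k'}$. The signs $(-1)^{n-k+s-l}$ cancel cleanly against those produced by this Pochhammer conversion, while the elementary identities $(\beta+1)_m=(\beta+1)_{m-n}(\beta+m-n+1)_n$ and, using $t=s+m-n+1$, $(\beta+1)_t=(\beta+1)_{m-n+1}(\beta+m-n+2)_s$ extract two clean prefactors $(\beta+m-n+1)_n$ and $(\beta+m-n+2)_s$. Writing $\Gamma(\beta+m+s+2-k-l)=\Gamma(\beta+m-n+2)\,(\beta+m-n+2)_{k'+l'}$ then exposes the Appell-$F_2$ structure
$$\sum_{k',l'}\frac{(\beta+m-n+2)_{k'+l'}\,(-n)_{k'}(-s)_{l'}}{k'!\,l'!\,(\beta+m-n+1)_{k'}(\beta+m-n+2)_{l'}}=F_2(\beta+m-n+2;-n,-s;\beta+m-n+1,\beta+m-n+2;1,1).$$

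Finally, the Pochhammer reflection identities $(\beta+m-n+1)_n=(-1)^n(-\beta-m)_n$ and $(\beta+m-n+2)_s=(-1)^s(-\beta-t)_s$ (both verified by reversing the product) convert the two extracted prefactors into the form stated in the theorem. I expect the main bookkeeping hurdle to be exactly this last Pochhammer identification together with the careful use of the shift $t-s=m-n+1$ to re-express $(\beta+1)_t$ and $\Gamma(\beta+m+s+2-k-l)$ in terms of $\beta+m-n+2$; that one-step shift is the combinatorial signature of the extra $z$ factor, and is responsible for the second lower $F_2$-parameter being $\beta+m-n+2$ rather than the $\beta+m-n+1$ that would appear in the $|z|^2$ case of Theorem~\ref{Thm7.1}.
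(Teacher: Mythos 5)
Your proposal is correct and arrives at exactly the stated constant, delta, and Appell-$F_2$ parameters, but it takes a somewhat more elementary route than the paper. The paper's proof (in the appendix) first rewrites both polynomials via the confluent-hypergeometric representation \eqref{eq5.3}, so that after the polar-coordinate change the angular integral produces $2\pi\,\delta_{m-n+1,t-s}$ and the entire radial computation is absorbed into the single identity
\begin{equation*}
\int_{0}^\infty \tau^{\,c-1}e^{-\tau}\, {}_1F_1(-n;a;\tau)\,{}_1F_1(-s;b;\tau)\,d\tau=\Gamma(c)\,F_2(c;-n,-s;a,b;1,1),
\end{equation*}
which is quoted (with $c=\beta+m-n+2$, $a=\beta+m-n+1$, $b=\beta+t-s+1=\beta+m-n+2$) rather than derived, the prefactor $(-1)^{n+s}(-\beta-m)_n(-\beta-t)_s/(n!s!)$ being already present in \eqref{eq5.3}. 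You instead start from the monomial expansion \eqref{eq5.1}, integrate term by term, and reconstruct the $F_2$ series by the re-indexing $k'=n-k$, $l'=s-l$ together with the Pochhammer conversions $(n-k')!=(-1)^{k'}n!/(-n)_{k'}$, $(\beta+1)_m/(\beta+1)_{m-n+k'}=(\beta+m-n+1)_n/(\beta+m-n+1)_{k'}$, and $\Gamma(\beta+m+s+2-k-l)=\Gamma(\beta+m-n+2)(\beta+m-n+2)_{k'+l'}$; this in effect reproves the displayed integral identity for terminating ${}_1F_1$'s. What your version buys is self-containedness and a transparent explanation of why the two lower $F_2$-parameters are asymmetric ($\beta+m-n+1$ from the $(m,n)$ factor, $\beta+m-n+2$ from the shift $t-s=m-n+1$ in the $(t,s)$ factor); what it costs is the heavier bookkeeping, all of which you carry out correctly, including the final reflections $(\beta+m-n+1)_n=(-1)^n(-\beta-m)_n$ and $(\beta+m-n+2)_s=(-1)^s(-\beta-t)_s$.
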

\begin{theorem}\label{Thm7.3}
For $\beta\geq 0$ and $m\geq n$ we have
\begin{align}
\int_{\mathbb{R}^2}\oz\, Z_{m,n}^{(\beta)}(z,\oz)&\overline{Z_{t,s}^{(\beta)}(z,\oz)}d\nu(z,\oz)=\frac{(-1)^{n+s}(-\beta-m)_n(-\beta-t)_s \Gamma(m-n+\beta+1)}{n!s!}\notag\\
&\times F_2(\beta+m-n+1,-n,-s;\beta+m-n+1,\beta+m-n;1,1)\,\delta_{m-n-1,t-s}\notag\\
&=F_{\beta}(m,n)\quad\mbox{(say)}
\end{align}
\end{theorem}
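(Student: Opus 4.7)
The plan is to follow the same pattern that would be used to establish Theorems \ref{Thm7.1} and \ref{Thm7.2}: substitute the explicit ${}_1F_1$ form \eqref{eq5.3} for both polynomials, reduce to polar coordinates, dispatch the angular integral to produce the Kronecker delta, and then identify the surviving radial sum with the Appell function $F_2$ at $(1,1)$.

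Concretely, for $m\ge n$ and (on the support of the delta) $t\ge s$, I would write
$$Z_{m,n}^{(\beta)}(z,\oz) = \tfrac{(-1)^n}{n!}(-\beta-m)_n\, z^{m-n}\,{}_1F_1(-n;\beta+m-n+1;z\oz),$$
and similarly for $\overline{Z_{t,s}^{(\beta)}(z,\oz)}$ with $z^{t-s}$ replaced by $\oz^{t-s}$ (all coefficients being real). Putting $z=re^{i\theta}$, the monomial and the extra factor $\oz$ combine to $r^{1+(m-n)+(t-s)}e^{i\theta[(m-n)-(t-s)-1]}$. Against $d\theta/\pi$ the angular integral gives $2\,\delta_{m-n-1,\,t-s}$, which is the claimed selection rule.

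On the support of the delta set $p=m-n$ and $u=r^2$. The radial integral collapses to
$$\int_0^\infty u^{\beta+p}e^{-u}\,{}_1F_1(-n;\beta+p+1;u)\,{}_1F_1(-s;\beta+p;u)\,du,$$
the second lower parameter being $\beta+(t-s)+1=\beta+p$. Expanding both confluent functions as finite polynomial sums, interchanging sum and integral, and evaluating each $\int_0^\infty u^{\beta+p+k+k'}e^{-u}du=\Gamma(\beta+p+1+k+k')$ leaves $\Gamma(\beta+p+1)$ times
$$\sum_{k,k'}\frac{(\beta+p+1)_{k+k'}\,(-n)_k(-s)_{k'}}{(\beta+p+1)_k(\beta+p)_{k'}\,k!\,k'!},$$
which by definition equals $F_2(\beta+m-n+1,-n,-s;\beta+m-n+1,\beta+m-n;1,1)$. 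Reinstating the prefactor $(-1)^{n+s}(-\beta-m)_n(-\beta-t)_s/(n!\,s!)$ yields the stated identity.

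The main obstacle is purely bookkeeping: the two bottom parameters of $F_2$ must come out \emph{asymmetric} ($\beta+m-n+1$ and $\beta+m-n$), and this asymmetry is generated precisely by the one-unit shift between $m-n$ and $t-s=m-n-1$ forced by the extra $\oz$ in the integrand. The boundary case $m=n$ (so $t-s=-1$, i.e., $t<s$) is handled by first applying \eqref{eq5.2} to rewrite $Z_{t,s}^{(\beta)}(z,\oz)=Z_{s,t}^{(\beta)}(\oz,z)$, after which the same calculation goes through with the roles of $(z,\oz)$ interchanged; the resulting formula is again encoded by the Kronecker delta on the right-hand side.
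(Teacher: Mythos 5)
Your proposal is correct and follows essentially the same route as the paper's appendix proof: insert the ${}_1F_1$ representation \eqref{eq5.3}, let the angular integral produce $\delta_{m-n-1,t-s}$, substitute $\tau=r^2$, and identify the resulting radial integral with $\Gamma(m-n+\beta+1)\,F_2(\beta+m-n+1,-n,-s;\beta+m-n+1,\beta+m-n;1,1)$ (the paper invokes a known Gordon-type integral from its reference \cite{yu}\cite{na} rather than expanding term by term, but the computation is the same). Your remark on the boundary case $m=n$, where $t-s=-1$ forces the branch \eqref{eq5.2}, is a point the paper passes over silently.
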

\begin{theorem}\label{Thm7.4}
For $\beta\geq 0$ and $m\geq n$ we have
\begin{align}
\int_{\mathbb{R}^2}\theta\, Z_{m,n}^{(\beta)}(z,\oz)&\overline{Z_{t,s}^{(\beta)}(z,\oz)}d\nu(z,\oz)=\frac{\pi (-\beta-m)_n(-\beta-t)_s \Gamma(m-n+\beta+1)}{n!(m-n+\beta+1)_n}\delta_{n,s}\delta_{m,t}.
\end{align}
\end{theorem}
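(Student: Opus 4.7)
The plan is to convert to polar coordinates $z=re^{i\theta}$, $\oz=re^{-i\theta}$, so that $d\nu(z,\oz)=r^{2\beta+1}e^{-r^2}\,dr\,d\theta/\pi$, and to exploit the explicit Laguerre form $Z_{m,n}^{(\beta)}(z,\oz)=z^{m-n}L_n^{(\beta+m-n)}(z\oz)$ valid for $m\geq n$, with the analogous form for $\overline{Z_{t,s}^{(\beta)}}$ assuming $t\geq s$. The integrand then separates as a product of a pure angular phase $e^{i[(m-n)-(t-s)]\theta}$, the factor $\theta$, a radial power $r^{(m-n)+(t-s)+2\beta+1}$, and two Laguerre polynomials evaluated at $r^2$.

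Carrying out the angular integration, I would isolate the resonant mode: $\int_0^{2\pi}\theta\,e^{ik\theta}\,d\theta=2\pi^2$ when $k=0$, which enforces the index relation $m-n=t-s$. Under that constraint the two Laguerre polynomials share the common upper index $\beta+m-n$, so after the substitution $u=r^2$ the radial part collapses to the classical orthogonality
\[
\int_0^\infty u^{\beta+m-n}e^{-u}L_n^{(\beta+m-n)}(u)L_s^{(\beta+m-n)}(u)\,du=\frac{\Gamma(m+\beta+1)}{n!}\,\delta_{n,s}.
\]
Combined with $m-n=t-s$, the factor $\delta_{n,s}$ automatically upgrades to $\delta_{n,s}\delta_{m,t}$, and collecting the constants $\tfrac{1}{\pi}\cdot 2\pi^2\cdot\tfrac{1}{2}=\pi$ yields $\pi\Gamma(m+\beta+1)/n!$ on the diagonal.

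To match the form in the statement I would apply the Pochhammer identities $(-\beta-m)_n=(-1)^n(\beta+m-n+1)_n$ and $\Gamma(m+\beta+1)=(\beta+m-n+1)_n\,\Gamma(m-n+\beta+1)$; on the support $m=t$, $n=s$ the product $(-\beta-m)_n(-\beta-t)_s$ equals $(\beta+m-n+1)_n^2$, and dividing by $(m-n+\beta+1)_n$ and multiplying by $\Gamma(m-n+\beta+1)$ reproduces $\pi\Gamma(m+\beta+1)/n!$ exactly, completing the identification. The delicate point is the angular step: the integral $\int_0^{2\pi}\theta e^{ik\theta}d\theta$ is not literally zero for $k\neq 0$, and a rigorous write-up must justify---in the spirit of the Fourier-type orthogonality already invoked in Theorems~\ref{Thm7.1}--\ref{Thm7.3}---why only the resonant $k=0$ mode survives in the expectation value encoded by the left-hand side.
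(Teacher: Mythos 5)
Your route is essentially the paper's: pass to polar coordinates, do the angular integral first, and then use radial orthogonality --- the paper invokes W.~Gordon's integral for products of ${}_1F_1$'s from \cite{na} where you use the equivalent classical orthogonality of $L_n^{(\alpha)}$, and the Pochhammer bookkeeping at the end is the same. The ``delicate point'' you flag is genuine and is \emph{not} resolved in the paper either: the appendix simply asserts that $\int_0^{2\pi}\theta\,e^{ik\theta}\,d\theta$ vanishes for $k\neq 0$, whereas in fact it equals $2\pi/(ik)$, so for $m-n\neq t-s$ the off-diagonal terms are not killed by the angular integration and the stated formula is only justified (by either argument) on the stratum $m-n=t-s$, where the radial orthogonality then supplies $\delta_{n,s}\delta_{m,t}$. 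On that stratum your computation is correct and reproduces the paper's constant $\pi\,\Gamma(m+\beta+1)/n!$.
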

\noindent For the proofs of Theorems \ref{Thm7.1} - \ref{Thm7.4} see the appendix.

\begin{theorem}\label{Thm7.5}
For $\beta\geq 0$ and $m\geq n$  we have
$$\int_{H}|\qu|^2 Z_{m,n}^{(\beta)}(\qu,\oqu)\overline{Z_{m,n}^{(\beta)}(\qu,\oqu)}d\nu(z,\oz)d\omega(u_{\qu})=G_{\beta}(m,n)\I_2.$$
\end{theorem}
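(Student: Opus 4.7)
The plan is to reduce the quaternionic integral to the complex integral of Theorem \ref{Thm7.1} using the $SU(2)$-diagonalization \eqref{eq2.6}, exactly in the spirit of the proof of Theorem \ref{Thm4.2}. Writing $\qu = u_\qu Z u_\qu^\dagger$, the key observation is that $|\qu|^2 = \overline{\qu}\,\qu = r^2\sigma_0 = |z|^2 \mathbb{I}_2$ is a real scalar (times the identity), so it commutes with every $2\times 2$ matrix that arises, and in particular factors out of the entire integrand.

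Using the series representation \eqref{eq5.13} together with \eqref{eq4.1}--\eqref{eq4.6} (as was already exploited in the proof of Theorem \ref{Thm6.3}), I would first record
$$Z_{m,n}^{(\beta)}(\qu,\oqu) = u_\qu\begin{pmatrix} Z_{m,n}^{(\beta)}(z,\oz) & 0 \\ 0 & \overline{Z_{m,n}^{(\beta)}(z,\oz)} \end{pmatrix} u_\qu^\dagger.$$
Taking quaternionic conjugate (which is matrix adjoint by \eqref{eq2.2}) swaps the two diagonal entries, and then $u_\qu^\dagger u_\qu = \mathbb{I}_2$ makes the inner factor collapse. Consequently
$$Z_{m,n}^{(\beta)}(\qu,\oqu)\,\overline{Z_{m,n}^{(\beta)}(\qu,\oqu)} = u_\qu \bigl(|Z_{m,n}^{(\beta)}(z,\oz)|^2 \,\mathbb{I}_2\bigr) u_\qu^\dagger = |Z_{m,n}^{(\beta)}(z,\oz)|^2\, \mathbb{I}_2,$$
and multiplying by $|\qu|^2 = |z|^2 \mathbb{I}_2$ gives
$$|\qu|^2 Z_{m,n}^{(\beta)}(\qu,\oqu)\,\overline{Z_{m,n}^{(\beta)}(\qu,\oqu)} = |z|^2\,|Z_{m,n}^{(\beta)}(z,\oz)|^2\, \mathbb{I}_2.$$

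With the integrand now a scalar (times $\mathbb{I}_2$) depending only on $(z,\oz)$, the integral over $SU(2)$ with respect to the normalized Haar measure $d\omega(u_\qu)$ contributes a factor of one, and what remains is
$$\mathbb{I}_2 \int_{\mathbb{C}} |z|^2\, Z_{m,n}^{(\beta)}(z,\oz)\,\overline{Z_{m,n}^{(\beta)}(z,\oz)}\, d\nu(z,\oz),$$
which is precisely the integral computed in Theorem \ref{Thm7.1} with $(t,s)=(m,n)$; the Kronecker factor $\delta_{m-n,t-s}$ is automatically satisfied, yielding $G_\beta(m,n)\,\mathbb{I}_2$.

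There is no real obstacle here: the argument is structurally the same as the proof of Theorem \ref{Thm4.2}, and the only thing one must be careful with is the ordering of the noncommutative factors, which is harmless because $|\qu|^2$ is central. If one wished to prove the more general off-diagonal version (analogous to Theorems \ref{Thm7.2}--\ref{Thm7.3}) where $\qu$ or $\oqu$ replaces $|\qu|^2$, the factor inserted between $Z$ and $\overline{Z}$ would no longer be central and the $SU(2)$-average of $u_\qu Z u_\qu^\dagger$-type expressions would have to be handled directly; but for the stated theorem the centrality of $|\qu|^2$ bypasses this entirely.
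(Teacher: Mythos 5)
Your proposal is correct and follows essentially the same route as the paper, whose proof is just the one-line remark that the result follows by combining the diagonalization argument of Theorem \ref{Thm4.2} with the complex integral of Theorem \ref{Thm7.1}; you have simply written out the details (the collapse of $u_{\qu}(\cdot)u_{\qu}^{\dagger}$ to a scalar multiple of $\mathbb{I}_2$, the centrality of $|\qu|^2$, and the triviality of the normalized Haar integral). Your closing observation about why $\qu$ and $\oqu$ in place of $|\qu|^2$ require a different treatment also matches the paper's own comment preceding Theorem \ref{IT2a}.
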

\begin{proof}Following the proof of Theorem (\ref{Thm4.2}) and the result of Theorem(\ref{Thm7.1}) it can easily be seen. 
\end{proof}
\noindent The proofs of quaternionic versions of Theorems (\ref{Thm7.2}) and (\ref{Thm7.3}) take slightly different path. We present it in the next Theorem. In order to do so we take a particular normalized measure on $SU(2)$ as follows:
\begin{equation}\label{eq7.5}
d\omega(u_{\qu})=d\omega(\phi,  \psi)=\frac{1}{4\pi}\sin{\phi}d\phi  d\psi;\quad 0\leq \phi\leq\pi,~~~0\leq\psi<2\pi.
\end{equation}
\begin{theorem}\label{IT2a}
For $\beta\geq 0$ and $m\geq n$  we have
\begin{enumerate}
\item[(a)]$\displaystyle\int_{H}\qu\, Z_{m,n}^{(\beta)}(\qu,\oqu)\,\overline{Z_{m,n}^{(\beta)}(\qu,\oqu)}\,d\nu(z,\oz)\,d\omega(\phi,\psi)
=\frac{E_{\beta}(m,n)+F_{\beta}(m,n)}{2}\I_2.$
\item[(b)]~$\displaystyle\int_{H}\oqu\, Z_{m,n}^{(\beta)}(\qu,\oqu)\,\overline{Z_{m,n}^{(\beta)}(\qu,\oqu)}\,d\nu(z,\oz)\,d\omega(\phi,\psi)
=\frac{E_{\beta}(m,n)+F_{\beta}(m,n)}{2}\I_2.$
\end{enumerate}
\end{theorem}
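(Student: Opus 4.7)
The strategy is to reduce the quaternionic integrals in (a) and (b) to already-established complex-variable integrals by using the matrix factorization \eqref{eq2.6}, exactly as in the proofs of Theorems \ref{Thm6.3} and \ref{Thm7.5}. The first key observation is that the product $Z_{m,n}^{(\beta)}(\qu,\oqu)\,\overline{Z_{m,n}^{(\beta)}(\qu,\oqu)}$ collapses to a real scalar multiple of $\I_{2}$. Writing $a=Z_{m,n}^{(\beta)}(z,\oz)$, the proof of Theorem \ref{Thm6.3} gives
\[
Z_{m,n}^{(\beta)}(\qu,\oqu) = u_{\qu}\begin{pmatrix} a & 0 \\ 0 & \overline{a}\end{pmatrix}u_{\qu}^{\dagger},
\]
and quaternionic conjugation equals matrix adjunction, so the product equals $u_{\qu}\,\mathrm{diag}(|a|^{2},|a|^{2})\,u_{\qu}^{\dagger}=|a|^{2}\I_{2}$. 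Since $|a|^{2}$ is a real scalar, the integrand in (a) reduces to $|Z_{m,n}^{(\beta)}(z,\oz)|^{2}\,\qu$, and in (b) to $|Z_{m,n}^{(\beta)}(z,\oz)|^{2}\,\oqu$.

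Because $|Z_{m,n}^{(\beta)}(z,\oz)|^{2}$ depends only on $(z,\oz)$ and not on $u_{\qu}$, Fubini lets me isolate the $SU(2)$ integral of $\qu$ (resp.\ $\oqu$). The central auxiliary identity is the averaging formula
\[
\int_{SU(2)} u_{\qu}\,A\,u_{\qu}^{\dagger}\,d\omega(u_{\qu}) = \tfrac{1}{2}\,\mathrm{tr}(A)\,\I_{2}, \qquad A\in M_{2}(\mathbb{C}),
\]
which I would establish either directly — substituting the explicit parametrization of $u_{\qu}$ from \eqref{eq2.6} together with the measure \eqref{eq7.5}, then using $\int_{0}^{2\pi} e^{\pm i\psi}\,d\psi=0$ and $\int_{0}^{\pi}\cos\phi\,\sin\phi\,d\phi=0$ to annihilate the off-diagonal and diagonal imaginary contributions — or invariant-theoretically, by irreducibility of the adjoint action of $SU(2)$ on traceless $2\times 2$ matrices. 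Applied to $A=\mathrm{diag}(z,\oz)$ (which equals $Z$, representing $\qu$) and to $A=\mathrm{diag}(\oz,z)=Z^{\dagger}$ (representing $\oqu$), both matrices have trace $z+\oz$, so
\[
\int_{SU(2)}\qu\,d\omega(u_{\qu}) = \int_{SU(2)}\oqu\,d\omega(u_{\qu}) = \frac{z+\oz}{2}\,\I_{2}.
\]

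Substituting back into (a) yields
\[
\int_{H}\qu\,Z_{m,n}^{(\beta)}(\qu,\oqu)\,\overline{Z_{m,n}^{(\beta)}(\qu,\oqu)}\,d\nu(z,\oz)\,d\omega(\phi,\psi) = \frac{\I_{2}}{2}\int_{\mathbb{C}}(z+\oz)\,|Z_{m,n}^{(\beta)}(z,\oz)|^{2}\,d\nu(z,\oz),
\]
and the two summands on the right are by definition $E_{\beta}(m,n)$ and $F_{\beta}(m,n)$ of Theorems \ref{Thm7.2} and \ref{Thm7.3} specialized to $t=m,\,s=n$; part (b) follows by the same chain, since the two $SU(2)$-averages coincide. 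The main obstacle, such as it is, is the $SU(2)$-averaging identity — elementary but requiring either a careful direct computation against \eqref{eq2.6}, \eqref{eq7.5} or a one-line invariant-theoretic appeal — after which everything reduces to bookkeeping against Section 7 (indeed, the Kronecker deltas in those theorems force each of $E_{\beta}(m,n)$ and $F_{\beta}(m,n)$ to vanish, consistent with the fact that $|Z_{m,n}^{(\beta)}(z,\oz)|^{2}$ is radial while $(z+\oz)/2=r\cos\theta$ integrates to zero on the circle).
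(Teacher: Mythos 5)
Your proof is correct and follows essentially the same route as the paper's: decompose via \eqref{eq2.6}, reduce to the complex integrals $E_{\beta}(m,n)$ and $F_{\beta}(m,n)$ of Theorems \ref{Thm7.2}--\ref{Thm7.3}, and evaluate the $SU(2)$ average of a diagonal matrix explicitly against the measure \eqref{eq7.5}. Your only departures are cosmetic --- performing the $SU(2)$ average before the $\mathbb{C}$ integral and packaging it as the trace identity $\int u_{\qu}Au_{\qu}^{\dagger}\,d\omega=\tfrac{1}{2}\mathrm{tr}(A)\,\I_2$, which is precisely the computation the paper carries out for $A=\mathrm{diag}(E_{\beta}(m,n),F_{\beta}(m,n))$.
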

\begin{proof}
(a)~Since
$$\Z(\qu,\oqu)\overline{\Z(\qu,\oqu)}=u_{\qu}
\left(\begin{array}{cc}
\Z(z,\oz)\overline{\Z(z,\oz)}&0\\
0&\overline{\Z(z,\oz)}\Z(z,\oz)\end{array}\right)u_{\qu}^{\dagger},$$
we have
$$\qu\Z(\qu,\oqu)\overline{\Z(\qu,\oqu)}=u_{\qu}
\left(\begin{array}{cc}
z\Z(z,\oz)\overline{\Z(z,\oz)}&0\\
0&\oz\overline{\Z(z,\oz)}\Z(z,\oz)\end{array}\right)u_{\qu}^{\dagger}.$$
Therefore
\begin{eqnarray*}
& &\int_{H}\qu Z_{m,n}^{(\beta)}(\qu,\oqu)\overline{Z_{m,n}^{(\beta)}(\qu,\oqu)}d\nu(z,\oz)d\omega(\phi,\psi)\\
&=&\int_0^{2\pi}\int_0^{\pi}u_{\qu}
\left(\begin{array}{cc}\int_{\C}
z\Z(z,\oz)\overline{\Z(z,\oz)}d\nu(z,\oz)&0\\
0&\int_{\C}\oz\overline{\Z(z,\oz)}\Z(z,\oz)d\nu(z,\oz)\end{array}\right)u_{\qu}^{\dagger}d\omega(\phi,\psi)\\
&=&\int_0^{2\pi}\int_0^{\pi}u_{\qu}\left(\begin{array}{cc}
E_{\beta}(m,n)&0\\
0&F_{\beta}(m,n)\end{array}\right)u_{\qu}^{\dagger}d\omega(\phi,\psi)\\
&=&\int_0^{2\pi}\int_0^{\pi}\\
& &\left(\begin{array}{cc}
E_{\beta}(m,n)\cos^2(\phi/2)+F_{\beta}(m,n)\sin^2(\phi/2)&(F_{\beta}(m,n)-E_{\beta}(m,n))\frac{i}{2}e^{i\psi}\sin{\phi}\\
(E_{\beta}(m,n)-F_{\beta}(m,n))\frac{i}{2}e^{-i\psi}\sin{\phi}
&E_{\beta}(m,n)\sin^2(\phi/2)+F_{\beta}(m,n)\cos^2(\phi/2)\end{array}\right)\\
& &\qquad d\omega(\phi,\psi)\\
&=&\left(\begin{array}{cc}
\frac{E_{\beta}(m,n)+F_{\beta}(m,n)}{2}&0\\
0&\frac{E_{\beta}(m,n)+F_{\beta}(m,n)}{2}\end{array}\right)=\frac{E_{\beta}(m,n)+F_{\beta}(m,n)}{2}\I_2,
\end{eqnarray*}
where we have used the integrals
$$\int_0^{2\pi}e^{\pm i\psi}d\psi=0,\quad\int_0^{\pi}\cos^2(\phi/2)\sin{\phi}d\phi=\int_0^{\pi}\sin^2(\phi/2)\sin{\phi}d\phi=1.$$
Proof of (b) is similar to that of part (a), therefore omitted.
\end{proof}
\section{Some possible physical applications}
\noindent In this section, based on the applications developed with 2D complex Hermite polynomials in \cite{ ABG, CGG, GS,AFG,  Thi3} and the applications developed with the corresponding quaternionic Hermite polynomials in \cite{Thi2, Thi1, MT}, we shall indicate few possible applications of the results presented in the present.
\subsection{Reproducing Kernels}
Let $(X, \tau)$ be a measure space. Whenever we have an orthonormal family $\{\Phi_m(x)\}_{m=0}^{\infty}$ with respect to the measure $d\tau$, satisfying
\begin{equation}\label{eq8.1}
\sum_{m=0}^{\infty}|\Phi_m(x)|^2<\infty;\quad x\in X,
\end{equation}
we can readily form a reproducing kernel
$$K(x,y)=\sum_{m=0}^{\infty}\overline{\Phi_m(x)}\Phi_m(y)$$
with the reproducing kernel Hilbert space $\HI_K=\overline{\mbox{span}}\{\Phi_m(x)~|~m=0,1,2,\cdots\}$, where the bar stands for the closure. In particular, the kernel $K(x,x)$ is relavent for the construction of coherent states (CS) \cite{Thi3, Ali}. Let
$$\MZ(z,\oz)=\frac{\Z(z,\oz)}{\sqrt{C(m,n,\beta)}},\quad\MZS(z,\oz)=\frac{Z^{(\beta)}_{s+n,n}(z,\oz)}
{\sqrt{C(s,n,\beta)}},\quad \MZ(\qu,\oqu)=\frac{\Z(\qu,\oqu)}{\sqrt{C(m,n,\beta)}}.$$
The family of polynomials $\{\MZ(z,\oz)\}_{m=0}^{\infty}$ is an orthonormal family and from Theorem (\ref{Thm6.1}), the condition (\ref{eq8.1}) is satisfied by this family. Therefore, for each $n=0,1,2,\cdots$, we can obtain a reproducing kernel $K_n^{\beta}(z,\oz)$ and the corresponding reproducing kernel Hilbert space. In the same manner, using the orthonormal family $\{\MZS(z,\oz)\}_{n=0}^{\infty}$ and Theorem (\ref{Thm6.2}) we can also obtain a family of reproducing kernels and corresponding reproducing kernel Hilbert spaces. In fact, the analogous kernels and spaces corresponding to the 2D Hermite polynomials are the underlying structure of the analysis given in \cite{ABG, CGG, GS}. Since the quaternionic counterpart $\{\MZ(\qu,\oqu)\}_{m=0}^{\infty}$ is also an orthonormal family, using the results of Theorem (\ref{Thm6.3}), one can also obtain quaternionic reproducing kernels and reproducing kernel Hilbert spaces. Using the quaternionic Hermite polynomials introduced in \cite{Thi2} such an analysis is given in \cite{Thi1,MT}. In this case, using the same lines of arguments of Lemma (4.2) in \cite{Thi1}, one can prove that $K_n^{\beta}(\qu,\oqu)=K_n^{\beta}(z,\oz)\I_2$.
\subsection{Coherent states}
Let $(X, \tau)$ and $\{\Phi_m(x)\}_{m=0}^{\infty}$ be as in the previous subsection. Let $\{\psi_m\}_{m=0}^{\infty}$ be an orthonormal basis for a separable complex Hilbert space $\HI_{\C}$. Then one can write down a class of CS of quantum optics
\begin{equation}\label{eq8.2}
|x\rangle=K(x,x)^{-\frac{1}{2}}\sum_{m=0}^{\infty}\overline{\Phi_m(x)}\psi_m\in\HI_{\C}.
\end{equation}
By construction these states are normalized and satisfy a resolution of the identity relation
$$\int_X|x\rangle\langle x|K(x,x)d\tau(x)=I_{\HI_{\C}},$$
where $I_{\HI_{\C}}$ is the identity operator on $\HI_{\C}$. Using the kernels obtained with $\{\MZ(z,\oz)\}_{m=0}^{\infty}$ we can readily obtain a class of CS:
\begin{equation}\label{eq8.3}
|z,\oz,n,\beta\rangle=K_n^{\beta}(z,\oz)^{-\frac{1}{2}}\sum_{m=0}^{\infty}\overline{\MZ(z,\oz)}\psi_m\in\HI_{\C}.
\end{equation}
Similar classes of CS constructed with various polynomials have been the subject of various articles appeared in the recent literature. In \cite{Mo1} CS obtained with Gauss-hypergeometric functions have been used to study hyperbolic Landau levels. CS for Gol`dman-Krivchenkov Hamiltonian was built in \cite{Mo2} with Meixner-Pollaczek polynomials and in \cite{Mo3} a class of CS was obtained for Pseudo harmonic oscillator with circular Jacobi polynomials. CS can also be used to identify Bargmann type spaces \cite{ABG}. In this respect, regular and anti-regular subspaces of quaternionic Hilbert spaces have been studied using classes of CS build with quaternionic Hermite polynomials in \cite{Thi1}.
\subsection{Quantization map}
For the measure space $(X,\tau)$ the resolution of the identity of the sect of CS (\ref{eq8.2}) allows us to implement CS quantization of the set of parameters $X$ by associating a function $X\ni x\mapsto f(x)$, that satisfies appropriate conditions, through the quantization map \cite{ABG, CGG, GS}
\begin{equation}\label{eq8.4}
f(x)\mapsto A_f=\int_X K(x,x)f(x)|x\rangle\langle x|d\tau(x).
\end{equation}
In this respect, using the set of CS (\ref{eq8.3}) one can quantize $\C$ through the quantization map
\begin{equation}\label{eq8.5}
f(z,\oz)\mapsto A_f=\int_{\C}K_n^{\beta}(z,\oz)f(z,\oz)|z,\oz,n,\beta\rangle\langle z,\oz,n,\beta| d\nu(z,\oz).
\end{equation}
Using the kernel $K_s^{\beta}(z,\oz)$ with $\beta=0$, CS, and hence a quantization map the complex plane has been quantized in \cite{CGG, ABG}. In the same manner, using a set of CS obtained with 2D-zernike polynomials the complex unit disc was quantized in \cite{Thi3}. A non-commutative plane has been quantized with reproducing kernels, CS and a quantization map obtained with the Hermite polynomials $H_n(z)$ \cite{GS}. The operators obtained with $f(z,\oz)=z,$ $f(z,\oz)=\oz$ and $f(z,\oz)=|z|^2$ act as quantum annihilation, creation and energy operators respectively. In obtaining these operators, under the quantization map (\ref{eq8.5}), the integral representation presented in this note play the key role. For example, see \cite{ ABG,GS, Thi3} for further details. The appearance of $\beta$ in (\ref{eq8.5}) may convey some additional features to the existing literature. Recently, similar to the complex case, using a quantization map obtained with the canonical quaternionic CS the quaternions has been quantized in \cite{MT}. In this respect, one may use the quantization map
\begin{equation}\label{eq8.6}
f(\qu,\oqu)\mapsto A_f=\int_{\C}K_n^{\beta}(\qu,\oqu)|\qu,\oqu,n,\beta\rangle f(\qu,\oqu)\langle \qu,\oqu,n,\beta| d\nu(\qu,\oqu)d\omega(u_{\qu})
\end{equation}
to quantize the quaternions, $H$. For the appearance of $f(\qu, \oqu)$ in the middle and further technicalities we refer the reader to \cite{MT}.
\subsection{Some other considerations}
There is a natural way to associate probability distributions to CS arising from reproducing kernels \cite{AGH}. Using the CS obtained with 2D Hermite polynomials generalized Gamma and Poisson's distributions were obtained in \cite{ABG}. In the same manner using the CS (\ref{eq8.3}), one may obtain more generalized probability distributions. Since the class of quaternionic CS also has a similar structure, we can also obtain similar distributions with quaternions as parameter space. Using the ladder operators associated with 2D Hermite polynomials non-linear pseudo-bosons were studied in \cite{ABG}. A class of modular structures were also studied using the same ladder operators in \cite{AFG}. With the ladder operators obtained in this note, similar features may be examined. Since none of these issues has been tried in terms of quaternionic CS yet, there may be of some interest in studying these structures in terms of quaternions. In our view, an essential platform for such an attempt is provided in this note.
\section{Appendix}

\noindent Using the equation \eqref{eq4.5}, it is not difficult to show, for $m\geq n$, that
\begin{equation}\label{A1}
\partial_{\oz} Z_{m,n}^{(\beta)}(z,\oz)=- Z_{m,n-1}^{(\beta)}(z,\oz)
\end{equation}
Indeed, a direct differention of \eqref{eq4.5} with respect to $\oz$ yields
\begin{align*}
\partial_{\oz} Z_{m,n}^{(\beta)}(z,\oz)
&=-\dfrac{(-1)^{n-1}(-\beta-m)_{n-1}}{(n-1)!} z^{m-n+1}\,{}_1F_1(-n+1,\beta+m-n+2;z\oz)\\
&=- Z_{m,n-1}^{(\beta)}(z,\,\oz).
\end{align*}
Also, we have
\begin{align}\label{A2}
\left(z\partial_z -{\oz}\,\partial_{\oz}\right)Z_{m,n}^{(\beta)}(z,\oz)&=({m-n})Z_{m,n}^{(\beta)}(z,\oz)
\end{align}
using the following argument
\begin{align*}
\partial_z Z_{m,n}^{(\beta)}(z,\oz)
&=
\dfrac{(-1)^n}{n!} (m-n) z^{m-n-1}\,(-\beta-m)_n{}_1F_1(-n,\beta+m-n+1;z\oz)\notag\\
&-\dfrac{(-1)^n}{n!} \frac{n(-\beta-m)_n}{\beta+m-n+1}z^{m-n}\,\oz\,{}_1F_1(-n+1,\beta+m-n+2;z\oz)\notag\\
&=\frac{ (m-n)}{ z}\,Z_{m,n}^{(\beta)}(z,\oz)-\dfrac{\oz}{z}Z_{m,n-1}^{(\beta)}(z,\oz),
\end{align*}
\begin{lemma}\label{Lemma1}
 For $m\geq n$, we have the following recurrence relation
\begin{align*}
\oz\,Z_{m+1,n}^{(\beta)}(z,\oz)&=(1+\beta+m)\,Z_{m,n}^{(\beta)}(z,\oz)-(n+1) Z_{m+1,n+1}^{(\beta)}(z,\oz)
\end{align*}
\end{lemma}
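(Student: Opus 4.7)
The plan is to derive this recurrence from the confluent hypergeometric representation \eqref{eq5.3}, together with a single Kummer contiguous relation. First I would substitute
\[ Z_{m,n}^{(\beta)}(z,\oz)=\dfrac{(-1)^n}{n!}(-\beta-m)_n\, z^{m-n}\,{}_1F_1(-n;\beta+m-n+1;z\oz) \]
into each of the three terms $\oz\,Z_{m+1,n}^{(\beta)}$, $(1+\beta+m)\,Z_{m,n}^{(\beta)}$ and $(n+1)\,Z_{m+1,n+1}^{(\beta)}$, and pull out the common prefactor $\tfrac{(-1)^n}{n!}(-\beta-m)_n\,z^{m-n}$. The Pochhammer identities $(-\beta-m-1)_n/(-\beta-m)_n=(\beta+m+1)/(\beta+m-n+1)$ and $(-\beta-m-1)_{n+1}=(-\beta-m-1)(-\beta-m)_n$ account for all the ratios, after which a further factor $(\beta+m+1)$ is common to both sides and cancels.

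After these cancellations the lemma reduces to the single Kummer relation
\[ \frac{x}{c}\,{}_1F_1(a;c+1;x)={}_1F_1(a;c;x)-{}_1F_1(a-1;c;x), \]
applied with $a=-n$, $c=\beta+m-n+1$, $x=z\oz$. This identity is immediate from the Kummer series via $(a)_k-(a-1)_k=k\,(a)_{k-1}$ for $k\ge 1$, together with $(c)_k=c\,(c+1)_{k-1}$ and an index shift $k\mapsto k+1$.

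An equivalent, even more direct route is to invoke \eqref{eq5.16}, whose derivation rests on precisely the same ${}_1F_1$ contiguous relations and whose proof specializes verbatim to the commutative complex setting $\qu\mapsto z$. Substituting $m\mapsto m+1$ in \eqref{eq5.16} yields
\[ (\beta+m+1)\,Z_{m,n}^{(\beta)}(z,\oz)=(n+1)\,Z_{m+1,n+1}^{(\beta)}(z,\oz)+\oz\,Z_{m+1,n}^{(\beta)}(z,\oz), \]
which, solved for $\oz\,Z_{m+1,n}^{(\beta)}(z,\oz)$, is precisely the statement of the lemma. The only care required is Pochhammer bookkeeping after the shift $m\mapsto m+1$, and noting that $m\ge n$ forces $m+1\ge n+1$ so that every term lies in the $m\ge n$ regime in which the formulas of Theorem \ref{Thm4.5} and \eqref{eq5.16} have been established. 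I anticipate no substantive obstacle: the proof is entirely routine once this Pochhammer algebra is arranged correctly, with the main (minor) point of attention being the careful tracking of the shifted Pochhammer ratios.
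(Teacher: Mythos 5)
Your primary argument is exactly the paper's own proof: the paper also substitutes the ${}_1F_1$ representation of $Z_{m+1,n}^{(\beta)}(z,\oz)$ and applies the same Kummer contiguous relation ${}_1F_1(-n;2+\alpha;x)=\frac{1+\alpha}{x}\left[{}_1F_1(-n;1+\alpha;x)-{}_1F_1(-n-1;1+\alpha;x)\right]$ with $\alpha=\beta+m-n$, followed by the identical Pochhammer bookkeeping $(-\beta-m-1)_n(\beta+m-n+1)=(\beta+m+1)(-\beta-m)_n$ and $(-\beta-m-1)_{n+1}=-(\beta+m-n+1)(-\beta-m-1)_n$. The proposal is correct, and even supplies the series verification of the contiguous relation that the paper omits; the alternative route via \eqref{eq5.16} is the same computation in disguise, since that recurrence is asserted in the paper from the same ${}_1F_1$ relations.
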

\begin{proof}
By  the definition of the polynomials $Z_{m,n}^{(\beta)}(z,\oz)$, we have
\begin{align*}
Z_{m+1,n}^{(\beta)}(z,\oz)&=\dfrac{(-1)^n}{n!} z^{m-n+1}\,(-\beta-m-1)_n{}_1F_1(-n,\beta+m-n+2;z\oz).
\end{align*}
Using the following recurrence relation 
$${}_1F_1(-n;2+\alpha;x)=\frac{1+\alpha}{x}{}_1F_1(-n;1+\alpha;x)-\frac{1+\alpha}{x}{}_1F_1(-n-1;1+\alpha;x)$$
we obtain
\begin{align*}
Z_{m+1,n}^{(\beta)}(z,\oz)&=\dfrac{(-1)^n}{n!} z^{m-n+1}\,(-\beta-m-1)_n
\bigg(\frac{\beta+m-n+1}{z\oz}{}_1F_1(-n,\beta+m-n+1;z\oz)\notag\\
&-\frac{\beta+m-n+1}{z\oz}{}_1F_1(-n-1,\beta+m-n+1;z\oz)
\bigg)
\end{align*}
that implies
\begin{align*}
Z_{m+1,n}^{(\beta)}(z,\oz)&=\dfrac{1+\beta+m}{\oz}Z_{m,n}^{(\beta)}(z,\oz)-\dfrac{(n+1)}{\oz}Z_{m+1,n+1}^{(\beta)}(z,\oz).
\end{align*}
\end{proof}
\noindent By shifting the indices, Lemma \ref{Lemma1} can be written as
\begin{align*}
\oz\,Z_{m,n-1}^{(\beta)}(z,\oz)&=(\beta+m)\,Z_{m-1,n-1}^{(\beta)}(z,\oz)-n Z_{m,n}^{(\beta)}(z,\oz)
\end{align*}
Thus using \eqref{A1} we have
\begin{align*}
-\oz \partial_{\oz} \,Z_{m,n}^{(\beta)}(z,\oz)&=(\beta+m)\,Z_{m-1,n-1}^{(\beta)}(z,\oz)-n Z_{m,n}^{(\beta)}(z,\oz)
\end{align*}
and since 
\begin{align*}
Z_{m+1,n}^{(\beta)}(z,\oz)=\dfrac{1}{\oz}\left[(z\oz-n)Z_{m,n}^{(\beta)}(z,\oz)+(\beta+m)Z_{m-1,n-1}^{(\beta)}(z,\oz)
\right]
\end{align*}
we obtain, for $m\geq n$,
\begin{align}\label{A3}
\left(z - \partial_{\oz}\right) Z_{m,n}^{(\beta)}(z,\oz)=Z_{m+1,n}^{(\beta)}(z,\oz).
\end{align}
This prove the relation $a_1^{\dagger}\,Z_{m,n}^{(\beta)}(z,\oz)=Z_{m+1,n}^{(\beta)}(z,\oz)$ reported in Theorem 4.1.
\begin{lemma}\label{Lemma2}
For $m\geq n$
\begin{align}\label{L13}
\left(\frac{\oz}{z}\frac{\partial}{\partial {\oz}} -\oz +\frac{\beta+m-n}{z}\right)\,Z_{m,n}^{(\beta)}(z,\oz)
&=(n+1)\, Z_{m,n+1}^{(\beta)}(z,\oz).
\end{align}
\end{lemma}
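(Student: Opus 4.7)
The plan is to derive the identity by combining three relations already established earlier in the excerpt, all specialized to the complex variables $z,\oz$ (which is legitimate since the proofs of the quaternionic recurrences in Theorem \ref{Thm4.5} only use the commutativity of $\qu$ with $\oqu$, which is trivially true for $z,\oz$). The three ingredients I would use are: the recurrence \eqref{eq5.16}, the recurrence \eqref{eq5.17}, and the differentiation identity \eqref{A1}.

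First I would solve \eqref{eq5.16} (complex version) for the target quantity:
\begin{equation*}
(n+1)\,Z_{m,n+1}^{(\beta)}(z,\oz)=(\beta+m)\,Z_{m-1,n}^{(\beta)}(z,\oz)-\oz\, Z_{m,n}^{(\beta)}(z,\oz).
\end{equation*}
This already isolates $(n+1)Z_{m,n+1}^{(\beta)}$, but contains $Z_{m-1,n}^{(\beta)}$, which I need to eliminate in favor of $Z_{m,n}^{(\beta)}$ and $Z_{m,n-1}^{(\beta)}$. For that, I would divide the complex version of \eqref{eq5.17} through by $z$ to get
\begin{equation*}
(\beta+m)\,Z_{m-1,n}^{(\beta)}(z,\oz)=\frac{\beta+m-n}{z}\,Z_{m,n}^{(\beta)}(z,\oz)-\frac{\oz}{z}\,Z_{m,n-1}^{(\beta)}(z,\oz),
\end{equation*}
and then substitute this into the previous line.

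The resulting expression for $(n+1)Z_{m,n+1}^{(\beta)}(z,\oz)$ contains the three terms $\frac{\beta+m-n}{z}Z_{m,n}^{(\beta)}$, $-\frac{\oz}{z}Z_{m,n-1}^{(\beta)}$, and $-\oz Z_{m,n}^{(\beta)}$. Finally I would apply \eqref{A1}, which gives $Z_{m,n-1}^{(\beta)}(z,\oz)=-\partial_{\oz}Z_{m,n}^{(\beta)}(z,\oz)$, so the middle term becomes $\frac{\oz}{z}\partial_{\oz}Z_{m,n}^{(\beta)}(z,\oz)$. Collecting the three contributions yields exactly the operator $\frac{\oz}{z}\partial_{\oz}-\oz+\frac{\beta+m-n}{z}$ acting on $Z_{m,n}^{(\beta)}(z,\oz)$, which is the claim.

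There is no real obstacle here: the argument is a three-line algebraic manipulation that assembles previously proved recurrences. The only point requiring mild care is tracking the sign in \eqref{A1} when replacing $Z_{m,n-1}^{(\beta)}$ by a derivative, and noticing that dividing \eqref{eq5.17} by $z$ is harmless on the open set $z\neq 0$ (the identity then extends to all $z$ by polynomial continuity, since both sides of the claimed lemma are polynomial in $z,\oz$ after clearing denominators).
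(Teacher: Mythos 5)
Your argument is correct, and it reaches the identity by a genuinely different (and shorter) route than the paper. The paper's own proof starts from the contiguous relation for ${}_1F_1(-n;\alpha+2;x)$ applied to $Z_{m+1,n}^{(\beta)}$, obtaining $Z_{m+1,n}^{(\beta)}=\frac{1+\beta+m}{\oz}Z_{m,n}^{(\beta)}-\frac{n+1}{\oz}Z_{m+1,n+1}^{(\beta)}$, and then eliminates the index-$(m+1)$ polynomials using $zZ_{m,n}^{(\beta)}=Z_{m+1,n}^{(\beta)}-Z_{m,n-1}^{(\beta)}$ and the raising identity $(z-\partial_{\oz})Z_{m,n}^{(\beta)}=Z_{m+1,n}^{(\beta)}$ from \eqref{A3}; you instead assemble the result purely from \eqref{eq5.16}, \eqref{eq5.17} and \eqref{A1}, which is a cleaner three-line computation (I checked the algebra: solving \eqref{eq5.16} for $(n+1)Z_{m,n+1}^{(\beta)}$, substituting $(\beta+m)Z_{m-1,n}^{(\beta)}$ from \eqref{eq5.17} divided by $z$, and replacing $Z_{m,n-1}^{(\beta)}=-\partial_{\oz}Z_{m,n}^{(\beta)}$ does produce exactly the operator in \eqref{L13}). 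What your route buys is brevity and modularity; what it costs is that it leans on \eqref{eq5.16} and \eqref{eq5.17}, which the paper merely asserts as consequences of the standard confluent-hypergeometric recurrences without writing out derivations, whereas the paper's proof is self-contained at the level of the ${}_1F_1$ contiguous relations. One point you should make explicit to avoid the appearance of circularity: the identity \eqref{eq5.16} is re-derived in the paper's Lemma \ref{Lemma3} as a consequence of Lemma \ref{Lemma2}, so if \eqref{eq5.16} were only available through that chain your argument would be circular; it is not, because \eqref{eq5.16} and \eqref{eq5.17} follow directly from the explicit series \eqref{eq5.1} (or from \eqref{eq5.13} and the contiguous relations) independently of the lemma, but a sentence saying so would make the logic airtight. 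Your remark about dividing by $z$ being harmless is fine, since the statement itself already involves $1/z$.
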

\begin{proof}
Utilizing the recurrence relation
\begin{align*}
{}_1F_1(-n;\beta+m-n+2;z\oz)&=\dfrac{(\beta+m-n+1)}{z\oz}{}_1F_1(-n,\beta+m-n+1;z\oz)\notag\\
&-\dfrac{(\beta+m-n+1)}{z\oz}{}_1F_1(-n-1,\beta+m-n+1;z\oz)
\end{align*}
it can write 
\begin{align*}
Z_{m+1,n}^{(\beta)}(z,\oz)
&=(-\beta-m-1)_n\bigg(\dfrac{(\beta+m-n+1)}{\oz}\frac{1}{(-\beta-m)_n}\,Z_{m,n}^{(\beta)}(z,\oz)\\
&+\dfrac{(\beta+m-n+1)}{\oz}\dfrac{(n+1)}{(-\beta-m-1)_{n+1}}\,Z_{m+1,n+1}^{(\beta)}(z,\oz)
\bigg)\\
&=\dfrac{(1+\beta+m)}{\oz}\,Z_{m,n}^{(\beta)}(z,\oz)-\dfrac{(n+1)}{\oz}\,Z_{m+1,n+1}^{(\beta)}(z,\oz).
\end{align*}
However using
\begin{align*}
z Z_{m,n}^{(\beta)}(z,\oz)&=Z_{m+1,n}^{(\beta)}(z,\oz)-Z_{m,n-1}^{(\beta)}(z,\oz),\\
 z Z_{m,n+1}^{(\beta)}(z,\oz)&=Z_{m+1,n+1}^{(\beta)}(z,\oz)-Z_{m,n}^{(\beta)}(z,\oz),\\
\left(\partial_{\oz} -z\right)\,Z_{m,n}^{(\beta)}(z,\oz)&=-Z_{m+1,n}^{(\beta)}(z,\oz)
\end{align*}
we have
\begin{align*}
Z_{m+1,n}^{(\beta)}(z,\oz)
&=\dfrac{(\beta+m-n)}{\oz}\,Z_{m,n}^{(\beta)}(z,\oz)-\dfrac{(n+1)\,z}{\oz}  Z_{m,n+1}^{(\beta)}(z,\oz)
\end{align*}
thus
\begin{align*}
\oz\,Z_{m+1,n}^{(\beta)}(z,\oz)+(n+1)\,z Z_{m,n+1}^{(\beta)}(z,\oz)
&=(\beta+m-n)\,Z_{m,n}^{(\beta)}(z,\oz),\\
\oz\left(\partial_{\oz} -z\right)\,Z_{m,n}^{(\beta)}(z,\oz)+(\beta+m-n)\,Z_{m,n}^{(\beta)}(z,\oz)
&=(n+1)\,z Z_{m,n+1}^{(\beta)}(z,\oz),\\
\left(\oz\partial_{\oz} -\oz z+\beta+m-n\right)\,Z_{m,n}^{(\beta)}(z,\oz)
&=(n+1)\,z Z_{m,n+1}^{(\beta)}(z,\oz).
\end{align*}
\end{proof}
\noindent Lemma \eqref{Lemma2} allow us to write
\begin{align*}
\left(\frac{\oz}{z}\frac{\partial}{\partial {\oz}} -\oz +\frac{\beta}{z}\right)\,Z_{m,n}^{(\beta)}(z,\oz)+\dfrac{(m-n)}{z}\,Z_{m,n}^{(\beta)}(z,\oz)
&=(n+1)\, Z_{m,n+1}^{(\beta)}(z,\oz).
\end{align*}
Hence, using the equation \eqref{A2}, 
\begin{align*}
\left(-\partial_{z}-\frac{\beta}{z}+\oz \right)\,Z_{m,n}^{(\beta)}(z,\oz)
&=-(n+1)\, Z_{m,n+1}^{(\beta)}(z,\oz).
\end{align*}
that prove the relation $a_2^{\dagger}\,Z_{m,n}^{(\beta)}(z,\oz)=(n+1)Z_{m,n+1}^{(\beta)}(z,\oz)$ reported in theorem 4.1.
\begin{lemma}\label{Lemma3}
For $m\geq n$,
\begin{align*}
\bigg(
\frac{\oz}{z}\frac{\partial}{\partial {\oz}} +\frac{\beta+m-n}{z}\bigg)\,Z_{m,n}^{(\beta)}(z,\oz)=
(\beta+m)Z_{m-1,n}^{(\beta)}(z,\oz).
\end{align*}
\end{lemma}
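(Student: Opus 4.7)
The plan is to reduce the identity to two facts that are already in the paper: the $\oz$-differentiation rule \eqref{A1} and the three-term recurrence \eqref{eq5.17}. Everything below takes place in the complex case; the recurrences \eqref{eq5.14}--\eqref{eq5.17}, although stated for $\qu\in H$, were obtained from contiguous relations of the ${}_1F_1$ representation \eqref{eq5.13} and use nothing beyond the commutativity of $\qu$ with $\oqu$ and with scalars, so they apply verbatim after the specialization $\qu=z$.

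First, multiply the claimed identity through by $z$, so that the asserted equality becomes
\[
\oz\,\partial_{\oz} Z_{m,n}^{(\beta)}(z,\oz) + (\beta+m-n)\,Z_{m,n}^{(\beta)}(z,\oz) = (\beta+m)\,z\,Z_{m-1,n}^{(\beta)}(z,\oz).
\]
Next, use \eqref{A1} to rewrite $\partial_{\oz}Z_{m,n}^{(\beta)}(z,\oz) = -Z_{m,n-1}^{(\beta)}(z,\oz)$, which turns the displayed equality into
\[
(\beta+m-n)\,Z_{m,n}^{(\beta)}(z,\oz) - \oz\,Z_{m,n-1}^{(\beta)}(z,\oz) = (\beta+m)\,z\,Z_{m-1,n}^{(\beta)}(z,\oz).
\]
This is exactly \eqref{eq5.17} with $\qu$ replaced by $z$, so the lemma follows by dividing back by $z$.

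I do not expect any serious obstacle. Should one want a self-contained argument that avoids invoking \eqref{eq5.17}, the cleanest alternative is to insert \eqref{eq5.13} directly and apply the standard contiguous relation
\[
(c-a)\,{}_1F_1(a-1;c;x) + (2a-c+x)\,{}_1F_1(a;c;x) - a\,{}_1F_1(a+1;c;x) = 0
\]
with $a=-n$ and $c=\beta+m-n+1$; the only tedious part is tracking the shifts of the Pochhammer factor $(-\beta-m)_n$ when moving between $Z_{m,n}^{(\beta)}$, $Z_{m,n-1}^{(\beta)}$, and $Z_{m-1,n}^{(\beta)}$, but all such shifts cancel cleanly because the difference $\beta+m-n$ is preserved under the pair of shifts $m\mapsto m-1$, $n\mapsto n-1$ used on the right-hand side.
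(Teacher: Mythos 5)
Your proof is correct, but it takes a different path from the paper's. The paper's appendix proof of Lemma \ref{Lemma3} first derives the pure recurrence $(\beta+m)Z_{m-1,n}^{(\beta)}=\oz\,Z_{m,n}^{(\beta)}+(n+1)Z_{m,n+1}^{(\beta)}$ (essentially \eqref{eq5.16}) from a contiguous relation of ${}_1F_1(-n,\beta+m-n;\cdot)$, and then eliminates the term $(n+1)Z_{m,n+1}^{(\beta)}$ using the differential identity of Lemma \ref{Lemma2}; so it routes through the $a_2^{\dagger}$ relation. You instead multiply by $z$, apply \eqref{A1} to replace $\oz\,\partial_{\oz}Z_{m,n}^{(\beta)}$ by $-\oz\,Z_{m,n-1}^{(\beta)}$, and observe that the result is literally \eqref{eq5.17} specialized to $\qu=z$. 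This is shorter and avoids Lemma \ref{Lemma2} altogether; the only thing it buys at a cost is that \eqref{eq5.17} is merely asserted in the paper (as a consequence of ${}_1F_1$ contiguous relations), so strictly you are trading one unproved recurrence for another. That trade is legitimate and non-circular --- \eqref{eq5.17} is obtained from \eqref{eq5.13} independently of the appendix lemmas, and it is easily checked directly from the series \eqref{eq5.1}: the coefficient identity reduces to $(\beta+m-n)+(n-k)=\beta+m-k$ together with $(\beta+m)(\beta+1)_{m-1-k}\cdot\frac{\beta+m-k}{\beta+m}=(\beta+1)_{m-k}/(\beta+m-k)\cdots$, i.e.\ to $(\beta+1)_{m-k}=(\beta+1)_{m-k-1}(\beta+m-k)$.

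One caveat on your optional ``self-contained'' fallback: the contiguous relation you quote, $(c-a)\,{}_1F_1(a-1;c;x)+(2a-c+x)\,{}_1F_1(a;c;x)-a\,{}_1F_1(a+1;c;x)=0$, shifts only the numerator parameter $a$ at fixed $c$. With $a=-n$ and $c=\beta+m-n+1$ fixed, the neighbours ${}_1F_1(a\pm1;c;x)$ correspond to $Z_{m+1,n+1}^{(\beta)}$ and $Z_{m-1,n-1}^{(\beta)}$ (the diagonal shifts preserving $m-n$), not to the polynomials $Z_{m,n-1}^{(\beta)}$ and $Z_{m-1,n}^{(\beta)}$ that appear in \eqref{eq5.17}; those require a relation mixing shifts of $a$ and $c$, namely one connecting ${}_1F_1(a;c;x)$, ${}_1F_1(a+1;c+1;x)$ and ${}_1F_1(a;c-1;x)$. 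So if you want the argument self-contained you should either use such a mixed relation or simply verify \eqref{eq5.17} from the series \eqref{eq5.1} as above. This does not affect your main argument, which stands.
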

\begin{proof} By definition, we have
\begin{align*}
Z_{m-1,n}^{(\beta)}(z,\oz)&=
\dfrac{(-1)^n}{n!} z^{m-n-1}\,(-\beta-m+1)_n{}_1F_1(-n,\beta+m-n;z\oz)
\end{align*}
Using the recurrence relation
\begin{align*}
{}_1F_1(-n,\beta+m-n;z\oz)&=(\frac{-n}{(-n+z\oz)}
{}_1F_1(-n+1,\beta+m-n;z\oz)\\
&+\dfrac{(n+\beta+m-n)z\oz}{(\beta+m-n)(-n+z\oz)}{}_1F_1(-n,\beta+m-n+1;z\oz),
\end{align*}
we obtain
\begin{align*}
(n-z\oz-\beta-m-n+z\oz)Z_{m-1,n}^{(\beta)}(z,\oz)
&=-\oz\,Z_{m,n}^{(\beta)}(z,\oz)-(n+1)Z_{m,n+1}^{(\beta)}(z,\oz).
\end{align*}
Thus finally
\begin{align*}
(\beta+m)Z_{m-1,n}^{(\beta)}(z,\oz)
&=\oz\,Z_{m,n}^{(\beta)}(z,\oz)+(n+1)Z_{m,n+1}^{(\beta)}(z,\oz)\\
&=\bigg(
\oz+\frac{\oz}{z}\frac{\partial}{\partial {\oz}} -\oz +\frac{\beta+m-n}{z}\bigg)\,Z_{m,n}^{(\beta)}(z,\oz)\\
&=\bigg(
\frac{\oz}{z}\frac{\partial}{\partial {\oz}} +\frac{\beta+m-n}{z}\bigg)\,Z_{m,n}^{(\beta)}(z,\oz).
\end{align*}
\end{proof}
\noindent Using equation \eqref{A2}, Lemma \ref{Lemma3} implies
\begin{align*}
\bigg(
\frac{\oz}{z}\frac{\partial}{\partial {\oz}} +\frac{\beta}{z}+ \frac{\left(z\partial_z -{\oz}\,\partial_{\oz}\right)}{z}\bigg)\,Z_{m,n}^{(\beta)}(z,\oz)=
(\beta+m)Z_{m-1,n}^{(\beta)}(z,\oz)
\end{align*}
and thus
\begin{align*}
\bigg(
\partial_z+\frac{\beta}{z} \bigg)\,Z_{m,n}^{(\beta)}(z,\oz)=
(\beta+m)Z_{m-1,n}^{(\beta)}(z,\oz),
\end{align*}
which prove the relation $a_1\,Z_{m,n}^{(\beta)}(z,\oz)=(\beta+m)Z_{m-1,n}^{(\beta)}(z,\oz)$. We complete the proof of Theorem 4.1. To prove Theorem 4.2, we note for $n>m$
\begin{align*}
Z_{n,m}^{(\beta)}(\oz,z)
&=\dfrac{(-1)^m}{m!}\oz^{n-m} (-\beta-n)_m{}_1F_1(-m;\beta+n-m+1,z\oz).
\end{align*}
Since
\begin{align*}
\partial_{z} Z_{n,m}^{(\beta)}(\oz,z)
&=-\dfrac{(-1)^{m-1}}{(m-1)!} \oz^{n-m+1}\,(-\beta-n)_{m-1}\,{}_1F_1(-m+1,\beta+n-m+2;z\oz)
\end{align*}
which prove $
\partial_{z} Z_{n,m}^{(\beta)}(\oz,z)=-Z_{n,m-1}^{(\beta)}(\oz,z)$ as noted in Theorem 4.2. Similarly, we can show for $n>m$ that
\begin{align*}
\left(\oz\, \partial_{\oz}-z\,\partial_{z}\right) Z_{n,m}^{(\beta)}(\oz,z)
&=(n-m)\, Z_{n,m}^{(\beta)}(\oz,z).
\end{align*}
and
\begin{align*}
Z_{n,m+1}^{(\beta)}(\oz,z)&=\dfrac{\oz^{n-m-1}z^{1+m-n}}{1+m}\left((\beta+n) Z_{n-1,m}^{(\beta)}(\oz,z))- \oz Z_{n,m}^{(\beta)}(\oz,z)\right).
\end{align*}
Thus, for $n>m$,
\begin{align*}
\left(\partial_{\oz}+\frac{\beta}{\oz}-z\right) Z_{n,m}^{(\beta)}(\oz,z)=(m+1)Z_{n,m+1}^{(\beta)}(\oz,z)
\end{align*}
as mentioned in Theorem 4.2. Using the recurrence relation
\begin{align*}
(n+\beta) Z_{n-1,m}^{(\beta)}(\oz,z)&=zZ_{n,m}^{(\beta)}(\oz,z)+(m+1)Z_{n,m+1}^{(\beta)}(\oz,z)
\end{align*}
where $n>m$, 
we have
\begin{align*}
(n+\beta) Z_{n-1,m}^{(\beta)}(\oz,z)&=zZ_{n,m}^{(\beta)}(\oz,z)+\left(\partial_{\oz}+\frac{\beta}{\oz}-z\right) Z_{n,m}^{(\beta)}(\oz,z)=\left(\partial_{\oz}+\frac{\beta}{\oz}\right) Z_{n,m}^{(\beta)}(\oz,z),
\end{align*}
which completes the proof of Theorem 4.2. For the proofs of Theorems \ref{Thm7.1} - \ref{Thm7.4}, we note first $d\nu(z,\oz)= r^{2\beta+1}e^{-r^2}{drd\theta}/\pi$ and
$$\int_0^{2\pi} e^{i(m-t-(n-s))\theta}d\theta=\left\{ \begin{array}{ll}
0 &\mbox{ if $m-n\neq t-s$} \\
2\pi &\mbox{ if $m-n= t-s$}
       \end{array} \right.=2\pi\,\delta_{m-n,t-s}.
$$
 For $f(z,\oz)=|z|^2$
\begin{align*}
\int_{\mathbb{R}^2}|z|^2 Z_{m,n}^{(\beta)}(z,\oz)\overline{Z_{t,s}^{(\beta)}(z,\oz)}r^{2\beta+1}e^{-r^2}drd\theta
&=\frac1\pi \int_0^\infty\int_0^{2\pi}Z_{m,n}^{(\beta)}(z,\oz)\overline{Z_{t,s}^{(\beta)}(z,\oz)}\,r^{2\beta+3}\,e^{-r^2}{drd\theta}
\end{align*}
and hence
\begin{align*}
\int_{\mathbb{R}^2}&|z|^2 Z_{m,n}^{(\beta)}(z,\oz)\overline{Z_{s,t}^{(\beta)}(z,\oz)}r^{2\beta+1}e^{-r^2}drd\theta
=\dfrac{2 (-1)^{n+s}\delta_{m-n,t-s}}{n!s!} \,(-\beta-m)_n\,(-\beta-t)_s\notag\\
&\times \int_0^\infty
r^{2(m-n)+2\beta+3}e^{-r^2}{}_1F_1(-n,\beta+m-n+1;r^2)\,{}_1F_1(-s,\beta+m-n+1;r^2)dr\\
&=\frac{(-1)^{n+s}(-\beta-m)_n(-\beta-t)_s \Gamma(m-n+\beta+2)}{n!s!}\notag\\
&\times F_2(m-n+\beta+2,-n,-s;1+\beta+m-n,1+\beta+m-n;1,1)\,\delta_{m-n,t-s}
\end{align*}
where  $F_2(a,b,b';c,c';w,z)$ is the Appell hypergeometric function \cite{yu}. For $f(z,\oz)=z$, 
\begin{align*}
\int_{\mathbb{R}^2}z\,Z_{m,n}^{(\beta)}(z,\oz)&\overline{Z_{t,s}^{(\beta)}(z,\oz)}r^{2\beta+1}e^{-r^2}drd\theta
=\frac1\pi\dfrac{(-1)^{n+s}}{n!s!} \,(-\beta-m)_n\,(-\beta-t)_s\notag\\
&\times \int_{0}^\infty \int_0^{2\pi}
r^{m-n+t-s+2\beta+2}e^{i((m-t)-(n-s)+1)\theta}e^{-r^2}
\notag\\
&\times {}_1F_1(-n,\beta+m-n+1;r^2)\,{}_1F_1(-s,\beta+t-s+1;r^2)drd\theta.
\end{align*}
However
$$
\int_0^{2\pi}e^{i((m-t)-(n-s)+1)\theta} d\theta=2\pi\, \delta_{m-n+1,t-s}
$$
and
\begin{align*}
\int_{0}^\infty 
&r^{m-n+t-s+2\beta+2}e^{-r^2} {}_1F_1(-n;\beta+m-n+1;r^2)\,{}_1F_1(-s;\beta+t-s+1;r^2)dr\\
&=\frac12{\Gamma(m-n+\beta+2)}F_2(m-n+\beta+2; -n,-s;m-n+\beta+1,m-n+\beta+2;1,1),
\end{align*}
which complete the proof of Theorem \ref{Thm7.2}. For $f(z,\oz)=\oz$,
\begin{align*}
\int_{\mathbb{R}^2}\oz\, &Z_{m,n}^{(\beta)}(z,\oz)\overline{Z_{t,s}^{(\beta)}(z,\oz)}r^{2\beta+1}e^{-r^2}drd\theta\\
&=\frac1\pi\dfrac{(-1)^{n+s}}{n!s!} \,(-\beta-m)_n\,(-\beta-t)_s\int_{0}^\infty \int_0^{2\pi}
r^{m-n+t-s+2\beta+2}e^{i((m-t)-(n-s)-1)\theta}e^{-r^2}
\notag\\
&\times {}_1F_1(-n,\beta+m-n+1;r^2)\,{}_1F_1(-s,\beta+t-s+1;r^2)drd\theta.
\end{align*}
However
$
\int_0^{2\pi}e^{i((m-t)-(n-s)-1)\theta} d\theta=2\pi \delta_{m-n-1,t-s},
$ hence
we have
\begin{align*}
\int_{0}^\infty &
r^{m-n+t-s+2\beta+2}e^{-r^2} {}_1F_1(-n;\beta+m-n+1;r^2)\,{}_1F_1(-s;\beta+t-s+1;r^2)dr\\
&=\frac12 \int_0^\infty \tau^{(m-n+t-s+2\beta+1)/2}e^{-\tau} {}_1F_1(-n;\beta+m-n+1;\tau)\,{}_1F_1(-s;\beta+t-s+1;\tau)d\tau
\end{align*}
By putting
$t-s=m-n-1$, we obtain
\begin{align*}
\int_{0}^\infty &
r^{m-n+t-s+2\beta+2}e^{-r^2} {}_1F_1(-n;\beta+m-n+1;r^2)\,{}_1F_1(-s;\beta+t-s+1;r^2)dr\\
&=\frac12{\Gamma(m-n+\beta+1)}F_2(m-n+\beta+1; -n,-s;m-n+\beta+1,m-n+\beta;1,1)
\end{align*}
Finally for $f(z,\oz)=\theta$,
\begin{align*}
\int_{\mathbb{R}^2}\theta\,&Z_{m,n}^{(\beta)}(z,\oz)\overline{Z_{t,s}^{(\beta)}(z,\oz)}r^{2\beta+1}e^{-r^2}drd\theta
=\frac1\pi\dfrac{(-1)^{n+s}}{n!s!} \,(-\beta-m)_n\,(-\beta-t)_s\notag\\
&\times \int_{0}^\infty \int_0^{2\pi}
r^{m-n+t-s+2\beta+1}\theta e^{i((m-t)-(n-s))\theta}e^{-r^2}
\notag\\
&\times {}_1F_1(-n,\beta+m-n+1;r^2)\,{}_1F_1(-s,\beta+t-s+1;r^2)drd\theta.
\end{align*}
However, since 
$
\int_0^{2\pi}\theta e^{i((m-t)-(n-s))\theta}d\theta=2\pi^2\quad if\quad m-n=t-s,\quad else\quad =0
$, we have
\begin{align*}
\int_{\mathbb{R}^2}\theta\, &Z_{m,n}^{(\beta)}(z,\oz)\overline{Z_{t,s}^{(\beta)}(z,\oz)}r^{2\beta+1}e^{-r^2}drd\theta=\dfrac{\pi \,(-\beta-m)_n\,(-\beta-m)_n}{n!}  \frac{\Gamma(m-n+\beta+1)}{(m-n+\beta+1)_n}\delta_{ns}\delta_{m,t},
\end{align*}
where we used the integral \cite{na}
\begin{align*}
\int_0^\infty & x^{c-1}e^{-\lambda\, x}{}_1F_1(-m;c;\lambda\,x)\,{}_1F_1(-n;c;\lambda\,x)\,dx=\dfrac{\Gamma(c)\,n!}{\lambda^c\,(c)_n}\delta_{nm},\notag\\
&(c>0,~\lambda>0;~c\neq 0,-1,-2,\dots; \delta_{nm}=0~if~ n\neq m,\delta_{nm}=1~if~ n= m).
\end{align*}


\end{document}